\documentclass[reqno, a4paper]{amsart}

\usepackage{amsmath, amssymb,amsthm, mathrsfs}
\usepackage{txfonts}
\usepackage{enumerate}
\usepackage{color}
\usepackage{bbm}
\usepackage[hyperfootnotes=false]{hyperref}

\newtheorem{theorem}{Theorem}

\newtheorem{definition}[theorem]{Definition}
\newtheorem{lemma}[theorem]{Lemma}
\newtheorem{proposition}[theorem]{Proposition}
\newtheorem{convention}[theorem]{Important convention}
\newtheorem{remark}[theorem]{Remark}

\newcommand{\OB}{Ob{\l}{\'o}j}

\newcommand{\eps}{\varepsilon}

\newcommand{\F}{\mathcal{F}}

\newcommand{\W}{\mathcal{W}}

\newcommand{\E}{\mathbb{E}}
\newcommand{\N}{\mathbb{N}}
\newcommand{\Q}{\mathbb{Q}}
\newcommand{\R}{\mathbb{R}}
\newcommand{\Z}{\mathbb{Z}}

\newcommand{\G}{\mathcal G}

\newcommand{\cG}{\mathcal{G}}

\newcommand{\cE}{\mathcal{E}}
\newcommand{\dd}{\,\mathrm{d}}

\newcommand{\1}{\mathbf 1}

\newcommand{\Leb}{\mathcal{L}}
\newcommand{\dds}{\mathrm{d}}
\newcommand{\CS}{\S}

\renewcommand{\W}{{\mathbb W}}
\renewcommand{\eps}{\varepsilon}
\renewcommand{\epsilon}{\varepsilon}
\renewcommand{\phi}{\varphi}

\renewcommand{\P}{\mathbb{P}}

\renewcommand{\S}{\Upsilon}
\renewcommand{\d}{\mathrm{d}}

\DeclareMathOperator{\supp}{supp}

\numberwithin{equation}{section}
\numberwithin{theorem}{section}

\newcommand{\RT}{\mathsf{RT}}

\newcommand{\RST}{\mathsf{RST}}

\newcommand{\TRST}{\mathsf{JOIN}}

\newcommand{\ol}[1]{\bar{#1}}

\renewcommand{\subset}{\subseteq}

\renewcommand{\llcorner}{\upharpoonright}

\DeclareMathOperator{\ntt}{\mathfrak{t}}
\newcommand{\tntt}{\bar\ntt}


\title{Pathwise super-replication via Vovk's outer measure}

\author[Beiglb\"ock]{Mathias Beiglb\"ock}
\address{Mathias Beiglb\"ock, Faculty of Mathematics, University of Vienna, Austria}
\email{mathias.beiglboeck@univie.ac.at}

\author[Cox]{Alexander M. G. Cox}
\address{Alexander M. G. Cox, Department of Mathematical Sciences, University of Bath, United Kingdom}
\email{A.M.G.Cox@bath.ac.uk}

\author[Huesmann]{Martin Huesmann}
\address{Martin Huesmann, Institut f\"ur Mathematik, Rheinische Friedrich-Wilhelms-Universit\"at Bonn, Germany}
\email{huesmann@iam.uni-bonn.de}

\author[Perkowski]{Nicolas Perkowski}
\address{Nicolas Perkowski, Institut f\"ur Mathematik, Humboldt-Universit\"at zu Berlin, Germany}
\email{perkowsk@math.hu-berlin.de}

\author[Pr\"omel]{David J. Pr\"omel}
\address{David J. Pr\"omel, Department of Mathematics, Eidgen\"ossische Technische Hochschule Z\"urich, Switzerland}
\email{david.proemel@math.ethz.ch}

\date{\today}

\begin{document}

\begin{abstract} 
  Since Hobson's seminal paper~\cite{Ho98a} the connection between model-independent pricing and the Skorokhod embedding problem has been a driving force in robust finance. We establish a general pricing-hedging duality for financial derivatives which are susceptible to the Skorokhod approach.
  
  Using Vovk's approach to mathematical finance we derive a model-independent super-replication theorem in continuous time, given information on finitely many marginals.
  Our result covers a broad range of exotic derivatives, including lookback options, discretely monitored Asian options, and options on realized variance.\medskip
  
  \noindent\emph{Keywords:} Model-Independent Pricing, Optimal Transport, Skorokhod Embedding, Super-Replication Theorem, Vovk's Outer Measure.\\
  \emph{Mathematics Subject Classification (2010):} Primary: 60G44, 91G20, 91B24.
  %
\end{abstract}

\maketitle

\let\thefootnote\relax\footnote{The authors are grateful for the excellent hospitality of the Hausdorff Research Institute for Mathematics (HIM), where the work was carried out. M.B.\ gratefully acknowledges support through FWF-projects P26736 and Y782-N25. M.H.\ gratefully acknowledges the financial support of the CRC 1060. D.J.P.\ gratefully acknowledges the financial support of the DFG Research Training Group 1845  and the Swiss National Foundation under Grant No.~$200021\_163014$.}

\section{Introduction}

Initiated by Hobson~\cite{Ho98a}, the theory of model-independent pricing has received substantial attention from the mathematical finance community, we refer to the survey~\cite{Ho11}. Starting with \cite{BeHePe12, GaHeTo13}, the Skorokhod embedding approach has been complemented through optimal transport techniques. In particular, first versions of a robust super-replication theorem have been established: in discrete time we mention \cite{AcBePeSc13} and the important contribution of Bouchard and Nutz~\cite{BoNu13}; for related work in a quasi-sure framework in continuous time we refer to the work of Neufeld and Nutz~\cite{NeNu13} and Possama\"i, Royer, and Touzi~\cite{PoRoTo13}. Our results are more closely related to the continuous time super-replication theorem of Dolinsky and Soner~\cite{DoSo12}, which we recall here: given a centered probability measure~$\mu$ on $\R$, they study the primal maximization problem
\begin{equation*}
  P:= \sup\big\{ \E_\P [G(S)]\big\}
\end{equation*}
where $S$ denotes the canonical process on $C[0,1]$, the supremum is taken over all martingale measures $\P$ on $C[0,1]$ with $S_1(\P)=\mu$ and $G$ denotes a functional on the path space satisfying appropriate continuity assumptions. The main result of \cite{DoSo12} is a super-replication theorem that appeals to this setup: they show that for each $p>P$ there exists a hedging strategy~$H$ and a ``European payoff function''~$\psi$ with $\int \psi \dd\mu=0$ such that 
\begin{equation*}
  p + (H\cdot S)_1 + \psi(S_1) \geq G(S).
\end{equation*}
This is in principle quite satisfying, however, a drawback is that the option $G$ needs to satisfy rather strong continuity assumptions, which in particular excludes all exotic option payoffs involving volatility. Given the practical importance of volatility derivatives it is desirable to give a version of the Dolinsky-Soner theorem that appeals also to this case. More recently, Dolinsky and Soner~\cite{DoSo14} have extended the original results of \cite{DoSo12} to include c\`adl\`ag price processes, multiple maturities and price processes in higher dimensions; Hou and Ob\l\'oj~\cite{HoOb15} have also recently extended these results to incorporate investor beliefs via a `prediction set' of possible outcomes.

Subsequently, we shall  establish a super-replication theorem that applies to $G$ which is invariant under time-changes in an appropriate sense. Opposed to the result of \cite{DoSo12} this excludes the case of continuously monitored  Asian options but covers other practically relevant derivatives such as options on volatility or realized variance, lookback options and discretely monitored Asian options. Notably, it constitutes a general duality result appealing to the rich literature on the connection of model-independent finance and Skorokhod embedding. In a series of  impressive achievements, Brown, Cox, Davis, Hobson, Klimmek, Neuberger, \OB, Pedersen, Raval, Rogers,  Wang, and others \cite{Ro93, Ho98a,BrHoRo01a, HoPe02, CoHoOb08, DaObRa10, CoOb11a,CoOb11b, CoWa13,  HoNe12,HoKl12} were able to determine the values of related primal and dual problems for a number of exotic derivatives/market data, proving that they are equal. Here we establish the duality relation for generic derivatives, in 
particular recovering duality for the specific cases mentioned above.

 To achieve this we apply a pathwise approach to model-independent finance which was introduced by Vovk~\cite{Vo11a,Vo12,Vo15}. In particular we rely on Vovk's pathwise Dambis Dubins-Schwarz theorem, which we combine with the duality theory for the Skorokhod embedding problem recently developed in~\cite{BeCoHu14}.

 \medskip 

After the completion of this work, we learned that Guo, Tan, and Touzi~\cite{GuTaTo15} derived a duality result similar in spirit to Theorem~\ref{thm:duality}. Their approach relies on different methods, and includes an interesting application to the optimal Skorokhod embedding problem.\medskip

\noindent{\bf Organization of the paper:} In Section~\ref{appetizer} we state our main result. In Section~\ref{sec:setting} Vovk's approach to mathematical finance is introduced and preliminary results are given. Section~\ref{sec:result} is devoted to the statement and proof of our main result in its simplest form: a super-replication theorem for time-invariant payoffs for one period. In Section~\ref{sec:duality} we present an extension to finitely many marginals with ``zero up to full information''; in particular we will then obtain our most general super-replication result, Theorem~\ref{thm:generalduality}.

\section{Formulation of the super-replication theorem}\label{appetizer}

For $n\in\mathbb{N}$ let $C[0,n]$ be the space of continuous functions $\omega \colon [0,n] \to \mathbb{R}$ with $\omega(0)=0$ and consider $G\colon C[0,n]\to \R$ of the form
\begin{equation}\label{eq:G-time-inv}
  G(\omega)=\gamma( \ntt (\omega)_{\llcorner [0, \langle \omega\rangle_{n}] }, \langle \omega\rangle_{1}, \ldots, \langle \omega\rangle_{n}),
\end{equation}
where $ \langle \omega\rangle_{\cdot}$ stands for the quadratic variation process of the path $\omega$ and $\ntt (\omega)$ stands for a version of the path $\omega$ which is rescaled in time so that for each $t$ its quadratic variation up to time $t$ equals precisely $t$. Let $S$ be the canonical process on $C[0,n]$. Under appropriate regularity conditions on $\gamma$ (see Theorems~\ref{HedgingDual} and \ref{thm:generalduality} below) we obtain the following robust super-hedging result: 

\begin{theorem}
  Suppose that $n\in\mathbb{N}$, $I\subseteq \{1,\ldots, n\}$, $n\in I$ and that $\mu_i$ is a centered probability measure on $\mathbb{R}$ for each $i\in I$. Setting  
  \begin{align*}
    P_n:=\sup\big\{ \E_\P[G]: \text{$\P$ is a martingale measure on $C[0,n]$, $S_0=0$,  $S_{i}\sim \mu_i$  for all  $i\in I$}\big\}
  \end{align*}
  and 
  \begin{align*}
    D_n:=\inf\left\{a\,:\, \begin{array}{l} \text{ there exist } H \text{ and } (\psi_j)_{j\in I}\text{ s.t. } \int \psi_j \dd \mu_j=0,\\
                                            a +\sum_{j\in I} \psi_j(S_j) + (H\cdot S)_n \geq G((S_t)_{t\leq n})
                           \end{array} \right\},
  \end{align*}
  one has $P_n=D_n$. Here $(H\cdot S)_n$ denotes the ``pathwise stochastic integral'' of $H$ with respect to $S$.
\end{theorem}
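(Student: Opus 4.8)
The plan is to deduce this general duality statement from the single-marginal case (Theorem~\ref{HedgingDual}, the $I=\{n\}$ version announced earlier) together with Vovk's pathwise Dambis--Dubins--Schwarz theorem and the Skorokhod embedding duality of \cite{BeCoHu14}. The inequality $P_n\le D_n$ is the ``easy'' direction: given any admissible super-hedging triple $(a,H,(\psi_j))$ and any martingale measure $\P$ with $S_i\sim\mu_i$ for $i\in I$, one takes expectations on both sides of $a+\sum_{j\in I}\psi_j(S_j)+(H\cdot S)_n\ge G$. The terms $\E_\P[\psi_j(S_j)]=\int\psi_j\,\d\mu_j=0$ vanish by the marginal constraint, and the pathwise stochastic integral $(H\cdot S)$ must be shown to be a $\P$-supermartingale (or at least to have nonpositive expectation) so that $\E_\P[(H\cdot S)_n]\le 0$; this is where the precise admissibility requirement on $H$ (from Section~\ref{sec:setting}) enters. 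Hence $a\ge \E_\P[G]$ for every such $\P$, giving $D_n\ge P_n$.

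For the hard direction $D_n\le P_n$, the strategy is to reformulate the primal problem on path space as an optimal Skorokhod embedding problem. Using the time-change $\ntt$ and Vovk's DDS theorem, any (``typical'') path $\omega$ carries a canonical Brownian path together with a vector of stopping times $\tau_1\le\cdots\le\tau_n$ corresponding to the quadratic-variation levels $\langle\omega\rangle_1,\dots,\langle\omega\rangle_n$; because $G$ has the time-invariant form \eqref{eq:G-time-inv}, $G(\omega)$ depends only on the Brownian path stopped at $\tau_n$ and on $(\tau_1,\dots,\tau_n)$, i.e.\ it is a functional of a multi-marginal Skorokhod embedding. Thus $P_n$ equals the supremum of $\E[\gamma(B_{\cdot\wedge\tau_n},\tau_1,\dots,\tau_n)]$ over embeddings of $(\mu_i)_{i\in I}$ (with the non-constrained indices free), and one invokes the duality theorem of \cite{BeCoHu14} to write this supremum as an infimum over dual objects: functions $\psi_j$ with $\int\psi_j\,\d\mu_j=0$ and a supermartingale/pathwise-integrand term dominating $\gamma$ along Brownian paths. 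The final step is to transport this dual certificate back to the original path space: a dominating inequality that holds along all Brownian paths stopped at the relevant times must be upgraded, via the pathwise DDS correspondence and the superhedging machinery of Section~\ref{sec:setting}, to a pathwise inequality $a+\sum_j\psi_j(S_j)+(H\cdot S)_n\ge G(S)$ valid for typical paths $\omega\in C[0,n]$, which exhibits an admissible triple with $a$ arbitrarily close to $P_n$.

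Concretely I would proceed in steps: \textbf{(1)} reduce from $I$ arbitrary to the ``full information'' case $I=\{1,\dots,n\}$ by noting that dropping a constraint only enlarges the primal feasible set and correspondingly restricts the $\psi_j$'s to be trivial, so it suffices to prove both bounds when all marginals are prescribed (this is the ``zero up to full information'' interpolation mentioned for Section~\ref{sec:duality}); \textbf{(2)} treat $n=1$ first, which is exactly Theorem~\ref{HedgingDual}; \textbf{(3)} iterate/concatenate over the $n$ periods, gluing the one-period superhedging inequalities and the Brownian building blocks, using the strong Markov property for the embedding side and the additivity of pathwise integrals for the hedging side; \textbf{(4)} verify the regularity hypotheses on $\gamma$ are preserved under these operations so that \cite{BeCoHu14} applies at each stage.

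The main obstacle I expect is step~\textbf{(3)} combined with the back-transport in the hard direction: Vovk's outer-measure framework gives statements that hold only for ``typical'' paths (outside a null set for the outer measure), and one must check that the exceptional sets accumulated through the time-change, the DDS representation, and the $n$-fold concatenation remain negligible, and that the dual Skorokhod-embedding hedge --- which lives on Brownian path space --- genuinely lifts to a \emph{pathwise} superhedging strategy $(H\cdot S)$ in the sense of Section~\ref{sec:setting} with the correct admissibility. Making the passage between ``quasi-sure along Brownian motion'' and ``pathwise for typical $\omega$'' rigorous, and controlling the stochastic integrals uniformly so that the limit as $a\downarrow P_n$ still yields an admissible strategy, is the technical heart of the argument.
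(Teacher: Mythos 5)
Your overall skeleton does match the paper's: the easy inequality by taking expectations (with Fatou and the admissibility bound $-c-\alpha\zeta$ rather than a genuine supermartingale property of $(H\cdot S)$), and the hard inequality by (i) rewriting the primal as an optimal multi-marginal Skorokhod embedding problem, (ii) a duality theorem for that embedding problem, and (iii) transporting the dual certificate back to a pathwise superhedge via Vovk's Dambis--Dubins--Schwarz theorem and F\"ollmer's pathwise It\^o formula. However, the two places where you declare the work essentially done are exactly where the paper's proof lives, and your shortcuts do not go through. First, the reduction of general $I\subseteq\{1,\dots,n\}$ to the full-information case $I=\{1,\dots,n\}$ is not the triviality you claim: dropping a marginal constraint enlarges the primal feasible set \emph{and} removes the instrument $\psi_i$ from the dual, so both values increase and equality in the full-information case does not formally imply equality with partial information. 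The paper devotes the second half of the proof of Theorem~\ref{thm:duality} (the case $n=2$, $I=\{2\}$) to precisely this point: one writes $P^*_2=\sup_{\mu_1\preceq\mu_2}\inf_{(\psi_1,\psi_2)\in D(\gamma)}\{\int\psi_1\,\dd\mu_1+\int\psi_2\,\dd\mu_2\}$ and must interchange the supremum over $\mu_1$ with the infimum; this requires manufacturing compactness via the $\eps\phi_2$ perturbation (making $\psi_1^\eps$ coercive so that $\mu_1$ can be restricted to the compact set $\mathcal{P}_K$), another application of Theorem~\ref{minmax}, and a penalization of the convex-order constraint by smooth convex functions $c\in\textsf{CV}$ whose increasing compensators $\zeta^c$ allow $\psi_1^\eps$ and $c$ to be absorbed into the remaining dual terms, yielding a certificate in $D'(\gamma)$ that uses only $\psi_2$. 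None of this is ``restricting the $\psi_j$'s to be trivial''.

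Second, the multi-marginal embedding duality you plan to ``invoke'' from \cite{BeCoHu14} is not available there: that reference (as used in Theorem~\ref{thm:SEPDual}) is a one-marginal statement, and the multi-marginal extension is itself the main new technical result of the paper (Theorem~\ref{thm:duality}). The paper does \emph{not} obtain it by concatenating one-period duality over the $n$ periods via the strong Markov property; it proves it directly by an iterated Lagrangian/min-max argument: the constraint that $\tau_2$ be a stopping time is penalized by the class $H$ of functions $f(s)(g-\E_\W[g|\F_u])$, the marginal constraint by $\psi_2$, one applies Theorem~\ref{minmax} together with the auxiliary Theorem~\ref{thm:auxDual} (which in turn reduces, via a further min-max step, to classical optimal transport duality), and conditional expectations in the sense of Definition~\ref{EAverage} then restore the martingale structure of the dual. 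Your proposed period-by-period gluing would require the dual objects of the later periods (the $\psi$'s and martingale terms) to be selected measurably as functions of the stopped path from the earlier periods, with uniform control to pass to a single admissible pathwise strategy --- a measurable-selection/dynamic-programming difficulty that you flag as ``the technical heart'' but do not resolve, and which the paper's Lagrangian route avoids altogether. As it stands, steps (1) and (3) of your plan are genuine gaps rather than reductions.
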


Of course the present statement of our main result is imprecise in that neither the pathwise stochastic integral appearing in the formulation of $D_n$, nor the pathwise quadratic variation in the definition of $G$, nor the function $\ntt$ are properly introduced. We will address this in the following sections.

Examples of derivatives in the time-invariant form~\eqref{eq:G-time-inv} include the following:
\begin{itemize}
   \item[--] $G_1(\omega) = F_1(\omega(1), \dots, \omega(n), \langle \omega \rangle_1,  \ldots, \langle \omega\rangle_{n})$;
   \item[--] $G_2(\omega) =  F_2(\max_{t \in [0,n]} \omega(t))$;
   \item[--] $G_3(\omega) = F_3(\int_0^n \varphi(\omega(s), \langle \omega \rangle_s) \dd \langle \omega \rangle_s)$;
   \item[--] $G_4(\omega) = F_4(G_1(\omega), G_2(\omega), G_3(\omega))$.
\end{itemize}
Examples that are not covered by our results are continuously monitored Asian options, $G(\omega) = F(\int_0^n \omega(s) \dd s)$. In that case we would have to discretize time and consider the discretely monitored version $\tilde{G}(\omega) = F(\sum_{k=0}^{n-1} \omega(k))$.

\section{Super-hedging and outer measure}\label{sec:setting}

Very recently, Vovk~\cite{Vo11a,Vo12,Vo15}, see also~\cite{TaKuTa09}, developed a new model free approach to mathematical finance based on hedging. Without presuming any probabilistic structure, Vovk considers the space of real-valued continuous functions as possible price paths and introduces an outer measure on this space, which is based on a minimal super-hedging price.

Vovk defines his outer measure on all continuous paths, and then shows that ``typical price paths'' admit a quadratic variation. To simplify many of the statements and proofs below, we restrict ourselves from the beginning to paths admitting quadratic variation. We discuss in Remark~\ref{rem:restriction to qv paths} below why this is no problem.

To be precise, define for a continuous path $\omega \colon \R_+ \to \R$ and $n\in \mathbb{N}$ the stopping times
\begin{align*}
  \sigma_0^n:=0 \quad \text{and}\quad
  \sigma_k^n:= \inf \{t\geq \sigma_{k-1}^n\,:\,\omega(t) \in 2^{-n}\Z \text{ and } \omega(t)\neq \omega (\sigma_{k-1}^n)\},
\end{align*}
for $k\in \mathbb{N}$. For $n\in \mathbb{N}$ the discrete quadratic variation of $\omega$ is given by
\begin{equation*}
  V^n_t(\omega) :=\sum_{k=0}^\infty \big(\omega({\sigma^n_{k+1}\wedge t})-\omega({\sigma_{k}^n\wedge t})\big)^2,\quad t\in \mathbb{R}_+.
\end{equation*}
We write $\Omega^{\mathrm{qv}}$ for the space of continuous functions $\omega \colon \R_+ \to \mathbb{R}$ with $\omega(0)=0$ such that $V^n(\omega)$ converges locally uniformly in time to a continuous limit $\langle \omega \rangle$ which has the same intervals of constancy as $\omega$; moreover, we assume that for every $\omega \in \Omega^{\mathrm{qv}}$ either $\lim_{t\rightarrow \infty} \omega(t)$ exists or $\langle \omega \rangle$ is unbounded on $\R_+$.

The coordinate process on $\Omega^{\mathrm{qv}}$ is denoted by $B_t(\omega):=\omega(t)$ and we introduce the natural filtration $(\mathcal{F}_t^{\mathrm{qv}})_{t\geq 0} := (\sigma(B_s: s \le t))_{t\geq 0}$ and set $\mathcal{F}^{\mathrm{qv}}:= \bigvee_{t\geq 0} \mathcal{F}^{\mathrm{qv}}_t$. Stopping times $\tau$ and the associated $\sigma$-algebras $\F^{\mathrm{qv}}_\tau$ are defined as usual. Occasionally we will also write $\langle B \rangle(\omega) = \langle \omega \rangle$.

A process $H \colon \Omega^{\mathrm{qv}} \times \mathbb{R}_+ \rightarrow \R$ is called a \emph{simple strategy} if it is of the form 
\begin{equation*}
  H_t(\omega) = \sum_{n= 0}^{\infty} F_n(\omega) \1_{(\tau_n(\omega),\tau_{n+1}(\omega)]}(t), \quad (\omega,t)\in \Omega^{\mathrm{qv}}\times \mathbb{R}_+,
\end{equation*}
where $0 = \tau_0(\omega) < \tau_1(\omega) < \dots$ are stopping times such that for every $\omega \in \Omega^{\mathrm{qv}}$ one has $\lim_{n\to \infty} \tau_n(\omega) = \infty$, and $F_n\colon \Omega^{\mathrm{qv}} \rightarrow \R$ are $\F^{\mathrm{qv}}_{\tau_n}$-measurable bounded functions for $n\in\mathbb{N}$. For such a simple strategy $H$ the corresponding capital process
\begin{equation*}
  (H \cdot B)_t(\omega) = \sum_{n=0}^\infty F_n(\omega) (B_{\tau_{n+1}(\omega) \wedge t}(\omega) - B_{\tau_n(\omega) \wedge t}(\omega))
\end{equation*}
is well-defined for every $\omega \in \Omega^{\mathrm{qv}}$ and every $t \in \mathbb{R}_+$. A simple strategy $H$ is called $\lambda$-\textit{admissible} for $\lambda > 0$ if $(H\cdot B)_t(\omega) \ge - \lambda$ for all $t \in \R_+$ and all $\omega \in \Omega^{\mathrm{qv}}$. We write $\mathcal{H}_\lambda$ for the set of $\lambda$-admissible simple strategies.

To recall Vovk's outer measure as introduced in \cite{Vo12}, let us define the set of processes
\begin{align*}
  \mathcal{V}_\lambda := \left\{ G:= \big(H^k\big)_{k\in \N}\,:\, H^k \in \mathcal{H}_{\lambda_k}, \lambda_k > 0, \sum_{k=0}^\infty \lambda_k = \lambda\right\}
\end{align*}
for an initial capital $\lambda \in (0, \infty)$. Note that for every $G = \big(H^k\big)_{k\in \N} \in \mathcal{V}_{\lambda}$, all $\omega \in \Omega^{\mathrm{qv}}$, and all $t \in \mathbb{R}_+$, the corresponding capital process
\begin{align*}
  (G\cdot B)_t(\omega) := \sum_{k = 0}^{\infty} (H^k\cdot B)_t(\omega) = \sum_{k = 0}^{\infty} \big(\lambda_k + (H^k\cdot B)_t(\omega)\big) - \lambda
\end{align*}
is well-defined and takes values in $[-\lambda, \infty]$. 

Then, Vovk's outer measure on $\Omega^{\mathrm{qv}}$ is given by 
\begin{align*}
  \overline{Q}(A) := \inf\Big\{\lambda > 0\,:\, \exists \, G \in \mathcal{V}_\lambda \text{ s.t. } 
                                        \lambda + \liminf_{t \to \infty}(G\cdot B)_t(\omega) \ge \1_A(\omega) \, \forall \omega \in \Omega^{\mathrm{qv}}\Big\}, \quad A \subseteq \Omega^{\mathrm{qv}}.
\end{align*}
A slight modification of the outer measure~$\overline{Q}$ was introduced in \cite{PePr13,PePr15}, which seems more in the spirit of the classical definition of super-hedging prices in semimartingale models. In this context one works with general admissible strategies and the It\^o integral against a general strategy is given as limit of integrals against simple strategies. So in that sense the next definition seems to be more analogous to the classical one.

\begin{definition}
  The \emph{outer measure} $\overline{P}$ of $A \subseteq \Omega^{\mathrm{qv}}$ is defined as the minimal super-hedging price of $\1_A$, that is 
  \begin{align*}
    \overline{P}(A):= \inf\Big\{\lambda > 0: \exists \, (H^n) \subseteq \mathcal{H}_\lambda \text{ s.t.} \liminf_{t\rightarrow\infty} \liminf_{n\rightarrow\infty}\big(\lambda + (H^n\cdot B)_t (\omega)\big) \ge \1_A(\omega)\, \forall \omega \in \Omega^{\mathrm{qv}}\Big\}.
  \end{align*}
  
  A set $A \subseteq \Omega^{\mathrm{qv}}$ is said to be a \emph{null set} if it has  $\overline{P}$ outer measure zero. A property (P) holds for \emph{typical price paths} if the set $A$ where (P) is violated is a null set.
\end{definition}

Of course, for both definitions of outer measures it would be convenient to just minimize over simple strategies rather than over the limit (inferior) along sequences of simple strategies. However, this would destroy the very much appreciated countable subadditivity of both outer measures.
 
\begin{remark}\label{rmk:outer}
  It is conjectured that the outer measure~$\overline{P}$ coincides with $\overline{Q}$. However, up to now it is only known that $\overline{P}(A) \leq \overline{Q}(A)$ for a general set $A \subseteq \Omega^{\mathrm{qv}}$, see \cite[Section~2.4]{PePr13}, and that they coincide for time-superinvariant sets, see Definition~\ref{def:superinvariant} and Theorem~\ref{thm:superinvariant} below. Therefore, the outer measures $\overline{P}$ and $\overline{Q}$ are basically the same in the present paper since we focus on time-invariant financial derivatives. 
\end{remark}

Perhaps the most interesting feature of $\overline{P}$ is that is comes with the following arbitrage interpretation for null sets.

\begin{lemma}[{\cite[Lemma~2.4]{PePr13}}]\label{lem:na1}
  A set $A \subseteq \Omega^{\mathrm{qv}}$ is a null set if and only if there exists a sequence of $1$-admissible simple strategies $(H^n)_{{n\in \N}} \subseteq \mathcal{H}_1$, such that
  \begin{align*}
    1 +\liminf_{t \rightarrow \infty}\liminf_{n \rightarrow \infty} (H^n \cdot B)_t (\omega) \ge \infty \cdot \1_A(\omega),
  \end{align*}
  where we use the convention $\infty \cdot 0 := 0$ and  $\infty \cdot 1 := \infty$.
\end{lemma}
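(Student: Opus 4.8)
The plan is to prove the two implications of the equivalence separately; the ``if'' direction is essentially a rescaling, and the content lies in the ``only if'' direction, where a single diagonal strategy does all the work. For the easy direction I would start from a sequence $(H^n)_{n\in\N}\subseteq\mathcal{H}_1$ with $1+\liminf_{t\to\infty}\liminf_{n\to\infty}(H^n\cdot B)_t\ge\infty\cdot\1_A$ everywhere, fix an arbitrary $\eps>0$, and pass to $\eps H^n\in\mathcal{H}_\eps$ (scaling a simple strategy by $\eps$ keeps it simple, and its capital process stays above $-\eps$). Since multiplication by a positive constant commutes with $\liminf$ and with the convention $\infty\cdot 0=0$, $\infty\cdot 1=\infty$, one gets $\eps+\liminf_t\liminf_n(\eps H^n\cdot B)_t=\eps\big(1+\liminf_t\liminf_n(H^n\cdot B)_t\big)\ge\eps\cdot\infty\cdot\1_A\ge\1_A$ pointwise (the value being $+\infty\ge1$ on $A$ and $0$ off $A$). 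Hence $\overline{P}(A)\le\eps$ for every $\eps>0$, i.e.\ $A$ is a null set.

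For the converse I would use that $\overline{P}(A)=0$ provides, for every $k\in\N$, a sequence $(H^{n,k})_{n\in\N}\subseteq\mathcal{H}_{2^{-k}}$ with $\liminf_{t\to\infty}\liminf_{n\to\infty}\big(2^{-k}+(H^{n,k}\cdot B)_t(\omega)\big)\ge\1_A(\omega)$ for all $\omega$. The decisive construction is the diagonal strategy $\tilde H^n:=\sum_{k=1}^n H^{n,k}$: as a finite sum of simple strategies it is again simple (merge the underlying sequences of stopping times), and since $\sum_{k=1}^n 2^{-k}<1$ its capital process stays above $-1$, so $(\tilde H^n)_n\subseteq\mathcal{H}_1$. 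To check the defining inequality, abbreviate $c^k_{n,t}:=2^{-k}+(H^{n,k}\cdot B)_t\ge0$ and observe that $1+(\tilde H^n\cdot B)_t\ge\sum_{k=1}^n c^k_{n,t}\ge\sum_{k=1}^m c^k_{n,t}$ for every $m\le n$, because the discarded terms are nonnegative. Freezing $m$, letting $n\to\infty$ and using superadditivity of $\liminf$ over finite sums, then taking the supremum over $m$, gives $1+\liminf_n(\tilde H^n\cdot B)_t\ge\sum_{k=1}^\infty\liminf_n c^k_{n,t}$; running the same ``truncate to finitely many nonnegative terms, pass to the limit, take the supremum'' manoeuvre in the variable $t$ (a Fatou-type inequality for series) yields
\[
  1+\liminf_{t\to\infty}\liminf_{n\to\infty}(\tilde H^n\cdot B)_t(\omega)\ \ge\ \sum_{k=1}^\infty\liminf_{t\to\infty}\liminf_{n\to\infty}c^k_{n,t}(\omega)\ \ge\ \sum_{k=1}^\infty\1_A(\omega)\ =\ \infty\cdot\1_A(\omega),
\]
which is exactly the asserted inequality.

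The main obstacle — really the only subtle point — is organising the double limit so that one fixed diagonal sequence $(\tilde H^n)_n$ simultaneously captures the contribution of \emph{every} level $k$. This succeeds precisely because, once $n\ge m$, the strategy $\tilde H^n$ already contains the first $m$ summands $H^{n,1},\dots,H^{n,m}$, whose shifted capital processes $c^k_{n,t}$ are nonnegative by admissibility and can therefore only help; that nonnegativity is exactly what legitimises both ``supremum over $m$'' steps and the interchanges with $\liminf_n$ and $\liminf_t$. The remaining points — that a finite sum of simple strategies is again a simple strategy, and the bookkeeping $\sum_{k=1}^n 2^{-k}<1$ ensuring $1$-admissibility — are routine.
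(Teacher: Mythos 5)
Your proof is correct, and since the paper gives no proof of its own here (it simply cites \cite[Lemma~2.4]{PePr13}), the relevant comparison is with that reference: your argument — rescaling by $\eps$ for the easy implication, and for the converse the diagonal strategy $\tilde H^n=\sum_{k=1}^n H^{n,k}$ built from $2^{-k}$-admissible sequences, with nonnegativity of the shifted capitals $2^{-k}+(H^{n,k}\cdot B)_t$ justifying the truncation and the two $\liminf$ interchanges — is essentially the standard proof given there. No gaps; the bookkeeping $1-\sum_{k\le n}2^{-k}\ge 0$ and the merging of stopping-time grids are exactly the routine points you identify.
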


A null set is essentially a model free arbitrage opportunity of the first kind. Recall that $B$ satisfies \emph{(NA1)} (no arbitrage opportunities of the first kind) under a probability measure~$\P$ on $(\Omega^{\mathrm{qv}}, \F^{\mathrm{qv}})$ if the set $\mathcal{W}^{\infty}_1 :=\big \{ 1 + \int_0^\infty H_s \dd B_s\,:\, H \in \mathcal{H}_1\big \}$ is bounded in probability, that is if $\lim_{n \to \infty} \sup_{X\in \mathcal{W}^{\infty}_1} \mathbb{P}( X \geq n)=0$. The notion (NA1) has gained a lot of interest in recent years since it is the minimal condition which has to be satisfied by any reasonable asset price model; see for example~\cite{An05, KaKa07, Ru13, ImPe15}. 

The next proposition collects further properties of $\overline{P}$. 

\begin{proposition}[{\cite[Proposition~3.3]{PePr15}}]\label{prop:properties}
  \begin{enumerate} 
    \item $\overline{P}$ is an outer measure with $\overline{P}(\Omega^{\mathrm{qv}})=1$, i.e. $\overline{P}$ is nondecreasing, countably subadditive, and $\overline{P}(\emptyset) = 0$.
    \item Let $\mathbb{P}$ be a probability measure on $(\Omega^{\mathrm{qv}}, \F^{\mathrm{qv}})$ such that the coordinate process~$B$ is a $\mathbb{P}$-local martingale, and let $A \in \F^{\mathrm{qv}}$. Then $\mathbb{P}(A) \le \overline{P}(A)$.
    \item Let $A \in \F^{\mathrm{qv}}$ be a null set, and let $\mathbb{P}$ be a probability measure on $(\Omega^{\mathrm{qv}}, \F^{\mathrm{qv}})$ such that the coordinate process $B$ satisfies (NA1) under $\P$. Then $\mathbb{P}(A) = 0$.
  \end{enumerate}
\end{proposition}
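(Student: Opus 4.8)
The plan is to treat the three assertions in turn, deducing the non‑trivial half of (1) from (2) and reserving the real work for (3).

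\emph{Part (1).} The outer‑measure axioms are read directly off the definition: $\overline P(\emptyset)=0$ by taking $H^n\equiv0$ and $\lambda\downarrow0$; $\overline P(\Omega^{\mathrm{qv}})\le1$ by taking $H^n\equiv0$ and $\lambda=1$; and monotonicity because $A\subseteq A'$ forces $\1_A\le\1_{A'}$, so any super‑replicating sequence for $\1_{A'}$ also works for $\1_A$. For countable subadditivity, given $A=\bigcup_jA_j$ and $\eps>0$ pick for each $j$ a sequence $(H^{j,n})_n\subseteq\mathcal H_{\lambda_j}$ super‑replicating $\1_{A_j}$ with $\overline P(A_j)<\lambda_j\le\overline P(A_j)+\eps2^{-j}$, set $\lambda:=\sum_j\lambda_j$ and $G^n:=\sum_{j\le n}H^{j,n}\in\mathcal H_\lambda$. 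For $\omega\in A_{j_0}$ and $n\ge j_0$ one has $\lambda+(G^n\cdot B)_t(\omega)\ge\lambda_{j_0}+(H^{j_0,n}\cdot B)_t(\omega)$, because each remaining summand $\lambda_j+(H^{j,n}\cdot B)_t$ is nonnegative by $\lambda_j$‑admissibility; hence $\liminf_{t\to\infty}\liminf_{n\to\infty}(\lambda+(G^n\cdot B)_t(\omega))\ge1$, so $\overline P(A)\le\lambda\le\sum_j\overline P(A_j)+\eps$. Finally $\overline P(\Omega^{\mathrm{qv}})\ge1$ follows from (2) applied to the law of Brownian motion (which is carried by $\Omega^{\mathrm{qv}}$ and under which $B$ is a martingale).

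\emph{Part (2).} The key fact is that the capital process of any simple $\lambda$‑admissible strategy is — being at each time a finite sum of stopped increments of the $\P$‑local martingale $B$ — itself a $\P$‑local martingale, and, being bounded below by $-\lambda$, a $\P$‑supermartingale; thus $\E_\P[(H\cdot B)_t]\le0$ for every $t$. Fix $\lambda>\overline P(A)$ and $(H^n)\subseteq\mathcal H_\lambda$ with $\liminf_{t\to\infty}\liminf_{n\to\infty}(\lambda+(H^n\cdot B)_t)\ge\1_A$ everywhere on $\Omega^{\mathrm{qv}}$. Restricting the time‑$\liminf$ to $t\in\N$ only increases it, so integrating against $\P$ and applying Fatou's lemma twice (both indices now being countable, and each term nonnegative) yields
\begin{equation*}
  \P(A)=\E_\P[\1_A]\le\liminf_{t\in\N}\liminf_{n\to\infty}\E_\P[\lambda+(H^n\cdot B)_t]\le\lambda .
\end{equation*}
Letting $\lambda\downarrow\overline P(A)$ gives $\P(A)\le\overline P(A)$. (Membership $A\in\F^{\mathrm{qv}}$ is used for measurability of $\1_A$ and of the countable $\liminf$'s.)

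\emph{Part (3).} By Lemma~\ref{lem:na1} a null set $A$ admits $(H^n)\subseteq\mathcal H_1$ such that the continuous nonnegative processes $Z^n_t:=1+(H^n\cdot B)_t$ satisfy $Z^n_0=1$ and $\liminf_{t\to\infty}\liminf_{n\to\infty}Z^n_t(\omega)=\infty$ for all $\omega\in A$. Since the left‑hand side stays $+\infty$ if $t$ is restricted to $\N$, choosing at each $\omega\in A$ an integer $T$ with $\liminf_nZ^n_T>c$ and then an $m$ with $Z^k_T\ge c$ for all $k\ge m$ shows that, for every $c>1$,
\begin{equation*}
  A\subseteq\bigcup_{m\in\N}F_m^c,\qquad F_m^c:=\bigcup_{T\in\N}\bigcap_{k\ge m}\{Z^k_T\ge c\},
\end{equation*}
and the sets $F_m^c$ \emph{increase} with $m$. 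Now $F_m^c\subseteq\{\sigma^m_c<\infty\}$ with $\sigma^m_c:=\inf\{t\ge0:Z^m_t\ge c\}$, and for each $N$ the strategy $H^m\1_{(0,\sigma^m_c\wedge N]}\in\mathcal H_1$ has eventually‑constant capital, so its terminal value lies in $\mathcal{W}^{\infty}_1$ and, by continuity of $Z^m$, equals $c$ on $\{\sigma^m_c\le N\}$; letting $N\to\infty$ gives $\P(F_m^c)\le\P(\sigma^m_c<\infty)\le\sup_{X\in\mathcal{W}^{\infty}_1}\P(X\ge c)$ for \emph{every} $m$. By monotonicity of the union, $\P(A)\le\P\big(\bigcup_mF^c_m\big)=\sup_m\P(F^c_m)\le\sup_{X\in\mathcal{W}^{\infty}_1}\P(X\ge c)$; since $B$ satisfies (NA1) under $\P$, i.e.\ $\mathcal{W}^{\infty}_1$ is bounded in probability, the right‑hand side tends to $0$ as $c\to\infty$, so $\P(A)=0$.

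\emph{Main obstacle.} Parts (1) and (2) are essentially bookkeeping plus elementary (super)martingale theory. The genuine difficulty is (3): (NA1) only controls the \emph{terminal} outcomes $\{1+\int_0^\infty H_s\dd B_s:H\in\mathcal H_1\}$ in probability, whereas Lemma~\ref{lem:na1} produces a set on which a $\liminf$ over \emph{both} time and the approximating index $n$ explodes, so a crude union bound over $n$ would be worthless. The trick that rescues the argument is to encode the exploding set through the infima $\bigcap_{k\ge m}\{Z^k_T\ge c\}$, whose monotonicity in $m$ turns $\bigcup_mF^c_m$ into an increasing union whose probability is dominated by the single $m$‑uniform bound coming from (NA1); the rest — admissibility of the stopped simple strategies, and the measurability/continuity technicalities that make integer (or rational) times suffice — is routine.
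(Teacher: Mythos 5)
Your proof is correct. Note that the paper itself does not prove this proposition — it is imported by citation from \cite[Proposition~3.3]{PePr15} — so there is no in-paper argument to compare against; your reconstruction follows the same standard route as the cited source: countable subadditivity by summing almost-optimal admissible strategies, part (2) via the supermartingale property of $(H\cdot B)$ under a local martingale measure plus a double application of Fatou along integer times, and part (3) via level-$c$ stopping of the strategies furnished by Lemma~\ref{lem:na1} together with boundedness in probability of $\mathcal{W}^{\infty}_1$. The only points worth making explicit (both unproblematic) are that $\overline{P}(\Omega^{\mathrm{qv}})\ge 1$ uses $\W(\Omega^{\mathrm{qv}})=1$, i.e.\ that Brownian sample paths almost surely lie in $\Omega^{\mathrm{qv}}$, and that the stopped strategies $H^m\1_{(0,\sigma^m_c\wedge N]}$ are again simple and $1$-admissible because $\sigma^m_c$ is the hitting time of a closed set by the continuous adapted capital process $Z^m$, hence a stopping time for the raw filtration.
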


Especially, the last statement is of interest in robust mathematical finance because it says that every property which is satisfied by typical price paths holds quasi-surely for all probability measures fulfilling~(NA1).

An essential ingredient to obtain our super-replication theorem for time-invariant derivatives is a very remarkable pathwise Dambis Dubins-Schwarz theorem as presented in~\cite{Vo12}. In order to give its precise statement here, we recall the definition of time-superinvariant sets, cf.~\cite[Section~3]{Vo12}.

\begin{definition}\label{def:superinvariant}
  A continuous non-decreasing function $f \colon \R_+ \to \R_+$ satisfying $f(0)=0$ is said to be a \emph{time-change}. The set of all time-changes will be denoted by $\cG_0$, the group of all time-changes that are strictly increasing and unbounded will be denoted by $\cG$. Given $f\in\cG_0$ we define $T_f(\omega):=\omega\circ f.$ A subset $A\subset \Omega^{\mathrm{qv}}$ is called \emph{time-superinvariant} if for all $f\in\cG_0$ it holds that 
  \begin{equation}\label{eq:inv}
    T_f^{-1}(A)\subset A. 
  \end{equation}
  A subset $A\subset \Omega^{\mathrm{qv}}$ is called \emph{time-invariant} if \eqref{eq:inv} holds true for all $f\in\cG$.
\end{definition}

For an intuitive explanation of time-superinvariance we refer to \cite[Remark~3.3]{Vo12}. We denote by $\W$ the Wiener measure on $(\Omega^{\mathrm{qv}}, \F^{\mathrm{qv}})$ and recall Vovk's pathwise Dambis Dubins-Schwarz theorem.
 
\begin{theorem}[{\cite[Theorem~3.1]{Vo12}}]\label{thm:superinvariant}
  Each time-superinvariant set $A\subset \Omega^{\mathrm{qv}}$ satisfies $\overline{P}(A)=\overline{Q}(A)=\W(A)$.
\end{theorem}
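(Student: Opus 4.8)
The plan is to squeeze the three numbers by showing $\W(A)\le\overline P(A)\le\overline Q(A)\le\W(A)$. The middle inequality holds for arbitrary sets (Remark~\ref{rmk:outer}). For $\W(A)\le\overline P(A)$ one argues as in Proposition~\ref{prop:properties}(2): if $(H^n)\subseteq\h_\lambda$ super-hedges $\1_A$ in the sense of the definition of $\overline P$, then under the Wiener measure every $(H^n\cdot B)$ is a local martingale bounded below, hence a supermartingale with $\E_\W[\lambda+(H^n\cdot B)_t]\le\lambda$, so a double application of Fatou's lemma shows that the $\F^{\mathrm{qv}}$-measurable function $\xi:=\liminf_{t\to\infty}\liminf_{n\to\infty}(\lambda+(H^n\cdot B)_t)$ satisfies $\E_\W[\xi]\le\lambda$; as $\xi\ge\1_A$ pointwise this gives $\W(A)\le\lambda$ (with $\W$ read as Wiener outer measure when $A\notin\F^{\mathrm{qv}}$), and taking the infimum over $\lambda$ yields the claim. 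Everything then reduces to $\overline Q(A)\le\W(A)$, which is the only place time-superinvariance enters.

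Fix $\lambda>\W(A)$; we must build $G\in\mathcal V_\lambda$ with $\lambda+\liminf_{t\to\infty}(G\cdot B)_t(\omega)\ge\1_A(\omega)$ for all $\omega\in\Omega^{\mathrm{qv}}$. Given $\omega$, put $\tau_s(\omega):=\inf\{t:\langle\omega\rangle_t>s\}$ and let $\hat\omega:=\omega\circ\tau(\omega)$ be the time-normalised path, which is continuous (because $\omega$ and $\langle\omega\rangle$ share their intervals of constancy), lies in $\Omega^{\mathrm{qv}}$ with $\langle\hat\omega\rangle_s=s\wedge\langle\omega\rangle_\infty$, and is the path written $\ntt(\omega)$ in Section~\ref{appetizer}. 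Since $\langle\omega\rangle\in\cG_0$ and $\hat\omega\circ\langle\omega\rangle=\omega$, i.e.\ $T_{\langle\omega\rangle}(\hat\omega)=\omega$, the defining inclusion $T_{\langle\omega\rangle}^{-1}(A)\subseteq A$ of time-superinvariance gives $\omega\in A\Rightarrow\hat\omega\in A$, so $\1_A(\omega)\le\1_A(\hat\omega)$ and it suffices to super-replicate $\omega\mapsto\1_A(\hat\omega)$. (For paths of bounded quadratic variation $\langle\omega\rangle$ belongs to $\cG_0\setminus\cG$; this is exactly why superinvariance, rather than mere time-invariance, is required.) Under $\W$ one has $\langle B\rangle_t=t$, hence $\hat B=B$ a.s., so $\{\omega:\hat\omega\in A\}$ has Wiener outer measure $\W(A)<\lambda$; by outer regularity of $\W$ choose an open $U\supseteq A$ in $\Omega^{\mathrm{qv}}$ with $\W(U)<\lambda$, and it is then enough to super-replicate the lower semicontinuous payoff $\omega\mapsto\1_U(\hat\omega)$, whose Wiener expectation $\E_\W[\1_U(\hat B)]=\W(U)$ is still $<\lambda$.

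The core construction has two layers. First, since $U$ is open, $\1_U=\sup_m\psi_m$ for an increasing sequence of bounded uniformly continuous $\psi_m\ge0$, so $\1_U(\hat\omega)=\psi_0(\hat\omega)+\sum_{k\ge1}\big(\psi_k(\hat\omega)-\psi_{k-1}(\hat\omega)\big)$ is a sum of \emph{nonnegative} terms; thus, if every bounded uniformly continuous $\psi\ge0$ gives rise to a single admissible simple (or simple-family) strategy super-replicating $\omega\mapsto\psi(\hat\omega)$ at cost $\E_\W[\psi]+\eps$, one assembles all these into one $G\in\mathcal V_{\W(U)+\sum_k\eps_k}\subseteq\mathcal V_\lambda$, using that $\lambda+(G\cdot B)_t$ is a sum of nonnegative terms so that Fatou's lemma for series passes the limit inside. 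Second, to super-replicate $\omega\mapsto\psi(\hat\omega)$: discretise along the dyadic stopping times $\sigma^n_k$ of Section~\ref{sec:setting}; the skeleton $(B_{\sigma^n_k})_k$ records only the ordered dyadic levels visited by $\omega$, is therefore time-change invariant, and accumulates discrete quadratic variation exactly $k\,2^{-2n}$ in $k$ steps, so (using $V^n(\omega)\to\langle\omega\rangle$ locally uniformly for every $\omega\in\Omega^{\mathrm{qv}}$) its piecewise-linear interpolation placed on the clock $k\mapsto k\,2^{-2n}$ converges locally uniformly to $\hat\omega$, whence $\psi$ evaluated along this interpolation converges to $\psi(\hat\omega)$; meanwhile, under $\W$ the skeleton is a simple symmetric random walk of step $2^{-n}$, so the Doob martingale of $\psi$ along the re-timed interpolated skeleton is a bounded martingale transform of that walk, i.e.\ the capital process of a simple strategy with initial capital $\E_\W[\psi]$ up to a discretisation error that vanishes as $n\to\infty$ and staying nonnegative since $\psi\ge0$; letting $n\to\infty$ and again absorbing the discretisation errors into a countable family gives the required super-hedge — this is essentially Vovk's pathwise Dambis--Dubins--Schwarz argument.

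The main obstacle is this last limiting step. The inequality must hold for \emph{every} $\omega\in\Omega^{\mathrm{qv}}$, not merely $\W$-a.s., and the convergence of the re-timed dyadic skeletons to $\hat\omega$ comes with no uniform rate (it is controlled only through $\omega$'s own modulus of continuity and the speed of $V^n\to\langle\omega\rangle$), so the simple strategies at level $n$ must be glued into a single element of $\mathcal V_\lambda$ that ``catches up'' on each path individually — the careful version of the nonnegative-telescoping bookkeeping sketched above. One must also treat separately the paths of bounded quadratic variation, where the skeleton terminates after finitely many steps and $\langle\omega\rangle$ fails to be strictly increasing. Carrying all of this out is precisely \cite[Theorem~3.1]{Vo12}, whose proof we follow.
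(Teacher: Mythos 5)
Your proof is correct and follows essentially the same route as the paper: the chain $\W(A)\le\overline{P}(A)\le\overline{Q}(A)$ from Proposition~\ref{prop:properties} and Remark~\ref{rmk:outer}, closed by Vovk's result $\overline{Q}(A)=\W(A)$ for time-superinvariant sets. The paper simply cites \cite[Theorem~3.1]{Vo12} for that last step, whereas you additionally sketch (accurately, including the key use of $T_{\langle\omega\rangle}(\ntt(\omega))=\omega$ and superinvariance) the argument behind it before deferring to Vovk, so the content matches.
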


\begin{proof}
  For every $A\subset\Omega^{\mathrm{qv}}$ Proposition~\ref{prop:properties} and Remark~\ref{rmk:outer} imply $\W(A)\leq \overline{P}(A) \leq \overline{Q}(A)$.
  If $A$ is additionally time-superinvariant, \cite[Theorem~3.1]{Vo12} says $\overline{Q}(A)=\W(A)$, which immediately gives the desired result. 
\end{proof}

Let us now introduce the \emph{normalizing time transformation} $\ntt$ in the sense of \cite{Vo12}. We follow \cite{Vo12} in defining the sequence of stopping times
\begin{equation}\label{eq:time-chnage}
  \tau_t(\omega):=\inf\left\{s\geq 0: \langle \omega \rangle_s > t\right\}
\end{equation}
for $t\in \mathbb{R}_+$ and $\tau_\infty := \sup_n \tau_n$. The \emph{normalizing time transformation} $\ntt\colon \Omega^{\mathrm{qv}} \to \Omega^{\mathrm{qv}}$ is given by
\begin{equation}\label{eq:qv-inverse}
  \ntt(\omega)_t:= \omega (\tau_t), \quad t\in \mathbb{R}_+, 
\end{equation}
where we set $\omega(\infty) := \lim_{t\to \infty} \omega(t)$ for all $\omega \in \Omega^{\mathrm{qv}}$ with $\sup_{t\ge 0} \langle \omega \rangle_t < \infty$. Note that $\ntt(\omega)_\cdot$ stays constant from time $\langle \omega \rangle_\infty$ on (which is of course only relevant if that time is finite). Below we shall also use $\ntt\colon C_{\mathrm{qv}}[0,1]\to \Omega^{\mathrm{qv}}$ which is defined analogously and where $C_{\mathrm{qv}}[0,1]$ denotes the space of paths that are obtained by restricting functions in $\Omega^{\mathrm{qv}}$ to $[0,1]$. On the product space $\Omega^{\mathrm{qv}} \times \R_+$ we further introduce
\begin{equation*}
  \tntt(\omega, t):= (\ntt(\omega), \langle \omega \rangle_t).
\end{equation*}

We are now ready to state the main result of \cite{Vo12}:

\begin{theorem}[{\cite[Theorem~6.4]{Vo12}}]\label{thm:vovk-dds}
  For any non-negative Borel measurable function \\$F\colon\Omega^{\mathrm{qv}}\to \R$, one has 
  \begin{equation*}
    \overline{\E}[F\circ \ntt,\, \langle B \rangle_\infty=\infty]=\int_{\Omega_\mathrm{qv}} F \dd \W,
  \end{equation*}
  where $\overline{\E}$ is the obvious extension of $\overline{P}$ from sets to nonnegative functions 
  and $\langle B \rangle_\infty := \sup_{t\ge 0} \langle B \rangle_t$.
\end{theorem}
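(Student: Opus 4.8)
Set $G:=(F\circ\ntt)\cdot\1_{\{\langle B\rangle_\infty=\infty\}}$, so that the assertion reads $\overline{\E}[G]=\int F\dd\W$. The plan is to deduce it from the set-version of Vovk's Dambis--Dubins--Schwarz theorem, Theorem~\ref{thm:superinvariant}, via a layer-cake argument, the crux being that the superlevel sets of $G$ are time-superinvariant. The first step is the lemma: \emph{for every $A\subseteq\Omega^{\mathrm{qv}}$ the set $\ntt^{-1}(A)\cap\{\langle B\rangle_\infty=\infty\}$ is time-superinvariant} (equivalently, since $F\ge 0$, $G\circ T_f\le G$ for all $f\in\cG_0$). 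This rests on two facts about a time-change $T_f$. Because the approximations $V^n$ are built from level-crossing stopping times rather than time-based ones, $\langle T_f\omega\rangle_t=\langle\omega\rangle_{f(t)}$; hence $\langle T_f\omega\rangle_\infty=\langle\omega\rangle_{f(\infty)}$, which is finite unless $f$ is unbounded \emph{and} $\langle\omega\rangle_\infty=\infty$, so $\{\langle B\rangle_\infty=\infty\}$ is itself time-superinvariant. Moreover, on $\{\omega:\langle T_f\omega\rangle_\infty=\infty\}$ one has $\ntt(T_f\omega)=\ntt(\omega)$: setting $\sigma:=\tau_t(T_f\omega)$, continuity of $s\mapsto\langle\omega\rangle_{f(s)}$ gives $\langle\omega\rangle_{f(\sigma)}=t=\langle\omega\rangle_{\tau_t(\omega)}$, so $f(\sigma)$ and $\tau_t(\omega)$ lie in a common interval of constancy of $\langle\omega\rangle$, which by the definition of $\Omega^{\mathrm{qv}}$ is also an interval of constancy of $\omega$; therefore $\ntt(T_f\omega)_t=(T_f\omega)(\sigma)=\omega(f(\sigma))=\omega(\tau_t(\omega))=\ntt(\omega)_t$. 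Combining, $T_f^{-1}\big(\ntt^{-1}(A)\cap\{\langle B\rangle_\infty=\infty\}\big)=(\ntt\circ T_f)^{-1}(A)\cap T_f^{-1}\{\langle B\rangle_\infty=\infty\}\subseteq\ntt^{-1}(A)\cap\{\langle B\rangle_\infty=\infty\}$.

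For Borel $A$ this set is Borel, so Theorem~\ref{thm:superinvariant} applies and $\overline{P}\big(\ntt^{-1}(A)\cap\{\langle B\rangle_\infty=\infty\}\big)=\W\big(\ntt^{-1}(A)\cap\{\langle B\rangle_\infty=\infty\}\big)$. Under $\W$ the coordinate process is a Brownian motion, so $\langle B\rangle_t=t$ and hence $\langle B\rangle_\infty=\infty$, $\tau_t=t$, $\ntt=\id$ all hold $\W$-a.s.; consequently $\W\big(\ntt^{-1}(A)\cap\{\langle B\rangle_\infty=\infty\}\big)=\W(A)$. In particular $\overline{P}(G>c)=\W(F>c)$ for every $c>0$.

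It then remains to pass from sets to functions. For ``$\ge$'', a super-hedge $(H^n)\subseteq\mathcal{H}_\lambda$ of $G$ yields, under $\W$, supermartingales $(H^n\cdot B)$ (local martingales bounded below), so $\E_\W[\lambda+(H^n\cdot B)_t]\le\lambda$, and two applications of Fatou's lemma give $\int F\dd\W=\E_\W[G]\le\lambda$; hence $\int F\dd\W\le\overline{\E}[G]$. For ``$\le$'', the pointwise bound $G\le\delta+\delta\sum_{i\ge1}\1_{\{G>i\delta\}}$ together with positive homogeneity and countable subadditivity of $\overline{\E}$ (which extend the corresponding properties of $\overline{P}$ from Proposition~\ref{prop:properties}, cf.\ \cite{PePr13,Vo12}) gives
\[
  \overline{\E}[G]\ \le\ \delta+\delta\sum_{i\ge1}\overline{P}(G>i\delta)\ =\ \delta+\delta\sum_{i\ge1}\W(F>i\delta)\ \le\ \delta+\int_0^\infty\W(F>c)\dd c\ =\ \delta+\int F\dd\W,
\]
and $\delta\downarrow0$ finishes the proof.

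I expect the main obstacle to be the lemma of the first paragraph: verifying $\langle T_f\omega\rangle_t=\langle\omega\rangle_{f(t)}$ directly from the definition of $V^n$, and tracking intervals of constancy carefully when $f\in\cG_0$ is merely non-decreasing, is the part that genuinely uses the structure of $\Omega^{\mathrm{qv}}$. A secondary, purely technical point is assembling the countably many sequences of simple strategies into a single admissible super-hedge --- i.e.\ the countable subadditivity of $\overline{\E}$ on nonnegative functions, the same diagonal construction already used for $\overline{P}$ in \cite{PePr13}. One should also note that $\ntt$ is Borel measurable, so that $\ntt^{-1}(\{F>c\})$ is Borel.
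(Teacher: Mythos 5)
Your proposal is correct, but it is worth pointing out that the paper itself offers no proof of this statement: Theorem~\ref{thm:vovk-dds} is imported verbatim from \cite[Theorem~6.4]{Vo12}, just as Theorem~\ref{thm:superinvariant} is imported from \cite[Theorem~3.1]{Vo12}. What you supply is a derivation of the function-level statement from the set-level Theorem~\ref{thm:superinvariant} already quoted in the paper, and your key lemma is sound: $V^n_t(T_f\omega)=V^n_{f(t)}(\omega)$ because the partitions are level-crossing based, hence $\langle T_f\omega\rangle=\langle\omega\rangle\circ f$, the set $\{\langle B\rangle_\infty=\infty\}$ is time-superinvariant, and $\ntt\circ T_f=\ntt$ on $\{\langle T_f\omega\rangle_\infty=\infty\}$ by the common-intervals-of-constancy property of $\Omega^{\mathrm{qv}}$; the same computation also shows $T_f$ maps $\Omega^{\mathrm{qv}}$ into itself, which you should record so that $T_f^{-1}$ makes sense in Definition~\ref{def:superinvariant}. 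Combined with $\W(\Omega^{\mathrm{qv}})=1$, $\langle B\rangle_t=t$ and $\ntt=\id$ $\W$-a.s., this gives $\overline{P}(G>c)=\W(F>c)$, and your layer-cake plus Fatou steps finish the argument; the Fatou step is just the function version of Proposition~\ref{prop:properties}(2). This is essentially Vovk's own route (his Theorem~6.4 rests on the time-superinvariance machinery), so your proof makes explicit what the paper delegates to the citation. The only ingredients used beyond what the paper states are positive homogeneity and countable subadditivity of $\overline{\E}$ on nonnegative functions, which you correctly attribute to the diagonal construction of \cite{PePr13}; alternatively, you could run the upper bound with Vovk's $\overline{Q}$, whose countable subadditivity for nonnegative functionals is proved in \cite{Vo12}, and transfer it to $\overline{\E}$ via the function analogue of the inequality $\overline{P}\le\overline{Q}$ of Remark~\ref{rmk:outer}, since Theorem~\ref{thm:superinvariant} gives $\overline{P}=\overline{Q}=\W$ on your (Borel, time-superinvariant) superlevel sets.
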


\begin{remark}\label{rem:restriction to qv paths}
  It might seem like a strong restriction that we only deal with paths in $\Omega^{\mathrm{qv}}$ rather than considering all continuous functions, however Vovk's result holds on all of $C(\R_+)$, the continuous paths on $\R_+$ started in 0, and is only slightly more complicated to state it in that case. In particular, Vovk shows that $C(\R_+) \setminus \Omega^{\mathrm{qv}}$ is atypical in the sense that for every $\varepsilon>0$ there exists a sequence of $\varepsilon$-admissible simple strategies $(H^n)$ on $C(\R_+)$ (which are defined in the same way as above, replacing every occurrence of $\Omega^{\mathrm{qv}}$ by $C(\R_+)$) such that for every $\omega \in C(\R_+) \setminus \Omega^{\mathrm{qv}}$ we have $\liminf_{t\rightarrow \infty} \liminf_{n \to \infty} (H^n \cdot B)_t (\omega) = \infty$. In particular, all our results continue to hold on $C(\R_+)$ because on the set $C(\R_+) \setminus \Omega^{\mathrm{qv}}$ we can superhedge any functional starting from an arbitrarily small $\varepsilon>0$. To simplify the presentation we restricted our attention to $\Omega^{\mathrm{qv}}$ from the beginning.
\end{remark}

\begin{remark}
   Vovk defines the normalizing time transformation slightly differently, replacing $\tau_t(\omega)$ by $\inf\{ s \ge 0: \langle \omega \rangle_s \ge t\}$, so considering the hitting time of $[t, \infty)$ rather than $(t,\infty)$. This corresponds to taking the c\`agl\`ad version $(\tau_{t-})_{t \ge 0}$ of the c\`adl\`ag process $(\tau_{t})_{t \ge 0}$. But since on $\Omega^{\mathrm{qv}}$ the paths $\omega$ and $\langle \omega \rangle$ have the same intervals of constancy, we get $\omega(\tau_{t-}) = \omega(\tau_t)$ for all $\omega \in \Omega^{\mathrm{qv}}$, and by Remark~\ref{rem:restriction to qv paths} more generally for all typical price paths in $C(\R_+)$.
\end{remark}

\section{Duality for one period}\label{sec:result}

Here we are interested in a one period duality result for derivatives $G$ on $C_{\mathrm{qv}}[0,1]$ of the form $\omega\mapsto G(\omega,\langle \omega\rangle_1)$ which are invariant under suitable time-changes of $\omega$. Typical examples for such derivatives are the running maximum up to time $1$ or functions of the quadratic variation. Formally, this amounts to 
\begin{equation*}
  G =\tilde G\circ \tntt (\cdot,1)
\end{equation*}
for some optional process $(\tilde G_t)_{t\geq 0}$ on $(\Omega^{\mathrm{qv}},(\F^{\mathrm{qv}}_t)_{t\ge 0})$, and more specifically we will focus on processes $\tilde G$ which are of the form $\tilde G_t(\omega) = \gamma(\omega_{\upharpoonright[0,t]},t)$, where $\omega_{\upharpoonright [0,t]}$ denotes the restriction of $\omega$ to the interval $[0,t]$ and $\gamma\colon \CS \to \R$ is an upper semi-continuous functional which is bounded from above. Here we wrote $\CS$ for the space of {\em stopped paths}
\begin{equation*}
  \CS := \{(f,s)\,:\, f \in C[0,s], s \in \R_+ \},
\end{equation*}
equipped with the distance $d_\CS$ which is defined for $s\leq t$ by
\begin{equation}\label{STop}
  d_\CS \big ((f,s),(g,t)\big ) := \max\Bigg( |t-s|, \sup_{0\le u \le s} |f(u) - g(u)|, \sup_{s \le u \le t} |g(u) - f(s)| \Bigg),
\end{equation}
and which turns $\CS$ into a Polish space. The space $\Upsilon$ is a
convenient way to express optionality of a process on $C(\R_+)$. Indeed, put 
\begin{equation*}
  r\colon C(\R_+)\times \R_+ \to \Upsilon,\quad (\omega,t)\mapsto (\omega_{\upharpoonright [0,t]},t).
\end{equation*}
By \cite[Theorem~IV.~97]{DeMeA}, a process $Y$ is predictable if and
only if there is a function $H\colon\Upsilon\to\R$ such that $Y=H\circ
r$. Moreover, since $\Omega^{qv}$ is a subset of the set of continuous
paths, the optional and predictable processes coincide. Hence $Y$ is
optional if and only if such a function $H$ exists. We can say that an
optional process $Y$ is $\Upsilon$-(upper/lower semi-) continuous if
and only if the corresponding function $H$ on $\Upsilon$ is
(upper/lower semi-) continuous.\medskip

Given a centered probability measure $\mu$ on $\R$ with finite first moment, we want to solve the primal maximization problem 
\begin{equation}\label{eq:primal}
  P := \sup\big\{ \E_\P [G] : \P \text{ is a martingale measure on } C_{\mathrm{qv}}[0,1] \text{ s.t. } S_1(\P) = \mu \big\},
\end{equation}
where $S$ denotes the canonical process on $C_{\mathrm{qv}}[0,1]$.

Since $\mu$ satisfies $\int |x|\dd \mu(x) < \infty$, there exists a smooth convex function $\varphi\colon\R\to\R_+$ with $\varphi(0)=0$, $\lim_{x\to \pm \infty} \varphi(x)/|x| = \infty$, and such that $\int \varphi(x) \dd \mu(x) < \infty$ (apply for example the de~la~Vall\'ee-Poussin theorem). From now on we fix such a function $\varphi$ and we define
\begin{equation*}
  \zeta_t(\omega) := \frac{1}{2} \int_0^t \varphi''(S_s(\omega)) \dd \langle S \rangle_s (\omega), \qquad (\omega,t) \in C_{\mathrm{qv}}[0,1] \times [0,1],
\end{equation*}
where we write $\langle S \rangle(\omega) := \langle \omega \rangle$ for $\omega \in \Omega^{\mathrm{qv}}$. We then consider for $\alpha, c > 0$ the set of (generalized admissible) simple strategies
\begin{equation*}
  \mathcal{Q}_{\alpha,c} :=\big\{H: H \text{ is a simple strategy and } (H \cdot S)_t(\omega) \ge -c - \alpha \zeta_t(\omega) \ \forall (\omega,t) \in C_{\mathrm{qv}}[0,1] \times [0,1] \big\}.
\end{equation*}
We also define the set of ``European options available at price 0'':
\begin{equation*}
  \mathcal{E}^0 := \Big\{\psi \in C(\R): \frac{|\psi|}{1 + \varphi} \text{ is bounded}, \int \psi(x) \dd \mu(x) =0\Big\}.
\end{equation*}

In this setting we shall deduce the following duality result for one period.

\begin{theorem}\label{HedgingDual}
  Let $\gamma\colon \CS \to \R$ be upper semi-continuous and bounded from above and let $\tilde G_t(\omega) = \gamma\circ r(\omega,t)$ and $G(\omega) = \tilde G\circ \tntt(\omega,1)$. Put
  \begin{equation*}
    D := \inf \left\{ p\,:\, \begin{array}{l} \exists c,\alpha>0, (H^n) \subseteq \mathcal{Q}_{\alpha,c}, \psi \in \mathcal{E}^0 \text{ s.t. } \forall \omega \in C_{\mathrm{qv}}[0,1] \\
                           p +\liminf_n (H^n \cdot S)_1(\omega) + \psi(S_1(\omega)) \ge G(\omega)
                         \end{array} \right\},
  \end{equation*}
  then we have the duality relation 
  \begin{equation}\label{HedgingDualEq2}
    P = D.
  \end{equation}
\end{theorem}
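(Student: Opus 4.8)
\emph{Weak duality ($P\le D$).} This is the routine part. Fix a competitor $p,c,\alpha,(H^n)\subseteq\mathcal Q_{\alpha,c},\psi\in\mathcal E^0$ for $D$ and a martingale measure $\P$ on $C_{\mathrm{qv}}[0,1]$ with $S_1\sim\mu$. A localisation argument with It\^o's formula gives $\E_\P[\zeta_1]=\int\varphi\,\d\mu<\infty$, so each capital process $(H^n\cdot S)$ is a $\P$-local martingale bounded below by the integrable random variable $-c-\alpha\zeta_1$, hence a $\P$-supermartingale with $\E_\P[(H^n\cdot S)_1]\le 0$. Applying Fatou's lemma to the nonnegative random variables $(H^n\cdot S)_1+c+\alpha\zeta_1$ yields $\E_\P[\liminf_n(H^n\cdot S)_1]\le 0$, while $\E_\P[\psi(S_1)]=\int\psi\,\d\mu=0$. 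Since $G$ is Borel and bounded above, $\E_\P[G]\in[-\infty,\infty)$ is well defined, and taking $\P$-expectations in the super-hedging inequality gives $\E_\P[G]\le p$; the supremum over $\P$ and the infimum over $p$ give $P\le D$.

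\emph{Strong duality ($D\le P$).} I would route this through the optimal Skorokhod embedding problem. First, Vovk's pathwise Dambis--Dubins--Schwarz theorem lets me identify $P$ with $\Opt:=\sup\{\E_\W[\gamma\circ r(B,T)]\}$, the supremum over uniformly integrable stopping times $T$ with $B_T\sim\mu$: given a martingale measure $\P$ on $C_{\mathrm{qv}}[0,1]$ with $S_1\sim\mu$, the DDS time change $\tau_t=\inf\{s:\langle S\rangle_s>t\}$ turns $S$ into a Brownian motion stopped at $T:=\langle S\rangle_1$, with $\ntt(S)=B^T$, $B_T\sim\mu$ and $G(S)=\gamma\circ r(B,T)$, so $P\le\Opt$; conversely, running a uniformly integrable embedding $B^T$ of $\mu$ through a fixed increasing homeomorphism of $[0,1)$ onto $[0,\infty)$ produces a martingale measure on $C_{\mathrm{qv}}[0,1]$ with the same value, so $P\ge\Opt$. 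The upper semicontinuity and boundedness above of $\gamma$ are exactly the hypotheses under which \cite{BeCoHu14} establishes strong duality for $\Opt$; this gives, for every $\eps>0$, a function $\psi$ with $\mu(\psi)$ within $\eps$ of $\Opt$ and $|\psi|/(1+\varphi)$ bounded, together with an admissible continuous $\W$-martingale $M$ with $M_0=0$ and $M_t\ge -c-\alpha\zeta^B_t$ (where $\zeta^B_t=\tfrac12\int_0^t\varphi''(B_s)\,\d s$) such that $\psi(B_t)+M_t\ge\gamma\circ r(B,t)$ for all $t\ge 0$, $\W$-almost surely. Writing $\psi=p+\psi_0$ with $\psi_0\in\mathcal E^0$ and $p=\mu(\psi)$ puts this certificate into the form $p+\psi_0(B_t)+M_t\ge\gamma\circ r(B,t)$.

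It then remains to pull this certificate back to a pathwise super-hedge on $C_{\mathrm{qv}}[0,1]$. Writing $M_t=\int_0^t h_s\,\d B_s$, I would time-change the integrand, $\tilde H_u(\omega):=h_{\langle\omega\rangle_u}(\ntt(\omega))$, so that the associativity of the pathwise stochastic integral under time changes gives $(\tilde H\cdot S)_u=M_{\langle\omega\rangle_u}(\ntt(\omega))$ and turns the admissibility bound into $(\tilde H\cdot S)_u\ge -c-\alpha\zeta_u(\omega)$, i.e. $\tilde H$ is approximable by simple strategies $(H^n)\subseteq\mathcal Q_{\alpha,c}$ with $\liminf_n(H^n\cdot S)_1\ge(\tilde H\cdot S)_1$. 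Evaluating the certificate at $t=\langle\omega\rangle_1$, and using that by Theorem~\ref{thm:vovk-dds} (equivalently Theorem~\ref{thm:superinvariant}) $\ntt$ transports $\overline P$ into $\W$, so that the $\W$-null set on which the certificate could fail corresponds to an $\overline P$-null set of $\omega$'s that we super-replicate at arbitrarily small extra cost via Lemma~\ref{lem:na1}, we obtain $p+\psi_0(S_1(\omega))+\liminf_n(H^n\cdot S)_1(\omega)\ge G(\omega)$ for all $\omega\in C_{\mathrm{qv}}[0,1]$. Hence $D\le p\le\Opt+\eps=P+\eps$, and letting $\eps\downarrow 0$ finishes the proof.

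The step I expect to be the main obstacle is this last one: passing rigorously between the infinite-horizon space $\Omega^{\mathrm{qv}}$, on which Vovk's DDS theorem is stated (and where the hypothesis ``$\langle B\rangle_\infty=\infty$'' is needed), and the finite-horizon space $C_{\mathrm{qv}}[0,1]$, on which $G$ and the hedge live. This requires extending paths of $C_{\mathrm{qv}}[0,1]$ into $\Omega^{\mathrm{qv}}$, controlling the behaviour at $t=1$ / at $\langle\omega\rangle_1$, and checking that the exceptional set on which the transported dual inequality fails is (essentially) time-superinvariant so that Theorems~\ref{thm:vovk-dds}--\ref{thm:superinvariant} apply to it. The approximation of the admissible dual martingale by simple strategies with the precise $\zeta$-admissibility bound preserved is a further bookkeeping step, but a comparatively routine one given the pathwise It\^o calculus available on $\Omega^{\mathrm{qv}}$.
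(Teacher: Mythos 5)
Your weak duality argument and your Step~1 (identification of $P$ with the optimal Skorokhod embedding value via the Dambis--Dubins--Schwarz time change, in both directions) coincide with the paper's, and the overall architecture --- reduce to SEP, invoke the duality of \cite{BeCoHu14}, transfer the dual certificate back through Vovk's pathwise DDS theorem --- is also the paper's. The genuine gap is exactly in the transfer step that you yourself flag as the main obstacle, and the way you propose to carry it out does not work as stated. First, you only use an almost-sure form of the \cite{BeCoHu14} dual (a general admissible martingale $M=\int h\,\d B$ with $\psi(B_t)+M_t\ge\gamma\circ r(B,t)$ for all $t$, $\W$-a.s.), and then try to produce a pathwise hedge by setting $\tilde H_u(\omega):=h_{\langle\omega\rangle_u}(\ntt(\omega))$ and appealing to ``associativity of the pathwise stochastic integral under time changes''. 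But in this framework there is no pathwise integral against a general predictable integrand: capital processes are defined only for simple strategies and for $\liminf$'s of such, and $h$ is only determined up to $\d\W\otimes\d t$-null sets, so $h_{\langle\omega\rangle_u}(\ntt(\omega))$ is not a well-defined path-by-path object and $(\tilde H\cdot S)_u=M_{\langle\omega\rangle_u}(\ntt(\omega))$ has no meaning for every $\omega\in C_{\mathrm{qv}}[0,1]$. Your fallback --- that the $\W$-null set where the certificate fails pulls back under $\ntt$ to a $\overline{P}$-null set which one covers via Lemma~\ref{lem:na1} --- is not carried out either: Theorem~\ref{thm:vovk-dds} only controls $F\circ\ntt$ on $\{\langle B\rangle_\infty=\infty\}$ on the infinite-horizon space, and you never verify that the relevant bad set (a set of pairs $(\omega,t)$ evaluated at $t=\langle\omega\rangle_1$ on the finite horizon) is measurable and amenable to that theorem; this is precisely the issue you defer rather than resolve.

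The paper's proof avoids all of this by using the stronger, genuinely pointwise form of the dual (Theorem~\ref{thm:SEPDual}): the inequality $p+\E_\W[m|\F_t](\omega)+\alpha Q(\omega,t)+\psi(B_t(\omega))\ge\gamma(\omega,t)$ holds for \emph{every} $(\omega,t)$, with $m\in C_b$ bounded and the conditional expectation defined pathwise by the concatenation formula of Definition~\ref{EAverage}, so no exceptional null set ever appears. The bounded part is then hedged by applying Theorem~\ref{thm:vovk-dds} to $\tilde m=m\circ\ntt$ to get simple strategies in $\mathcal H_{m_0+\eps}$, and Lemma~\ref{lem:cts domination} (a pointwise supermartingale argument) upgrades the superhedge of the terminal value $\tilde m\1_{\{\langle B\rangle_\infty=\infty\}}$ to a superhedge of the running martingale $\tilde M_t=M\circ\tntt(\omega,t)$ at every $(\omega,t)$; note that the boundedness of $m$ is essential both for applying Theorem~\ref{thm:vovk-dds} with finite capital and for the supermartingale bound. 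The unbounded term $\alpha Q$ is treated separately and exactly, via F\"ollmer's pathwise It\^o formula along the dyadic Lebesgue partitions, yielding honest simple strategies in $\mathcal Q_{1,\alpha}$; by absorbing this term into the admissibility bound of your martingale $M$ you lose both the boundedness needed for the Vovk step and any pathwise construction for the unbounded part. So the missing ingredients in your proposal are (i) the pointwise-everywhere dual certificate with a $C_b$ generator, and (ii) the two explicit pathwise constructions (Vovk superhedge of $m\circ\ntt$ plus Lemma~\ref{lem:cts domination}, and F\"ollmer's formula for the compensated convex term) that replace your ill-defined time-changed stochastic integral.
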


The inequality $P \leq D$ is fairly easy: If $p > D$, then there exists a sequence $(H^n) \subseteq \mathcal{Q}_{\alpha,c}$ and a $\psi \in C(\R)$ with $\int \psi(x) \dd \mu(x) = 0$ such that $p +\liminf_n (H^n \cdot S)_1(\omega) + \psi(S_1(\omega)) \ge G(\omega)$. In particular, for all martingale measures $\P$ on $C_{\mathrm{qv}}[0,1]$ with $S_1(\P) = \mu$ we have
\begin{equation*}
  \E_\P[G] \le \E_\P[p +\liminf_{n\to \infty} (H^n \cdot S)_1 + \psi(S_1)] \le p + \liminf_{n\to \infty} \E_\P[(H^n \cdot S)_1] + \E_\P[\psi(S_1)] \le p,
\end{equation*}
where in the second step we used Fatou's lemma, which is justified because $(H^n \cdot S)_1$ is uniformly bounded from below by $-c -\alpha \zeta_1$ and from It\^o's formula we get $\P$-almost surely
\begin{equation*}
  \varphi(S_t) = \int_0^t \varphi'(S_s) \dd S_s + \zeta_t,
\end{equation*}
which shows that $\zeta$ is the compensator of the $\P$-submartingale $\varphi(S)$ and therefore $\E_\P[\zeta_1] < \infty$.\medskip

In the following we concentrate on the inequality $P \geq D$ and proceed in three steps:
\begin{enumerate}
  \item[1.] Reduction of the primal problem~$P$ to optimal Skorokhod embedding~$P^*$: $P=P^*$. 
  \item[2.] Duality of optimal Skorokhod embedding~$P^*$ and a dual problem~$D^*$: $P^*=D^*$.
  \item[3.] The new dual problem~$D^*$ dominates the dual problem~$D$: $D\leq D^*$.
\end{enumerate}

Step~1: The idea, going back to Hobson~\cite{Ho98a}, is to translate the primal problem into an optimal Skorokhod embedding problem.  Let us start by observing that if $\P$ is a martingale measure for $S$, then by the Dambis Dubins-Schwarz theorem the process $(\ntt(S)_{t\wedge \langle S \rangle_1})_{t \ge 0}$ is a stopped Brownian motion under $\P$ in the filtration $(\F^S_{\tau_t})_{t\ge 0}$, where $(\F^S_t)_{t \in [0,1]}$ is the  usual augmentation of the filtration generated by $S$ and where $(\tau_t)_{t\ge 0}$ are the stopping times defined in~\eqref{eq:time-chnage}. It is also straightforward to verify that $\langle S \rangle_1$ is a stopping time with respect to $(\F^S_{\tau_t})$. Since moreover $\ntt(S)_{\langle S \rangle_1} = S_1 \sim \mu$ we deduce that there exists a new filtered probability space $(\tilde{\Omega}, (\G_t)_{t \ge 0}, \Q)$ with a Brownian motion $B$ and a stopping time $\tau$, such that $B_\tau \sim \mu$, the process $B_{\cdot \wedge \tau}$ is a uniformly integrable martingale, and
\begin{equation*}
  \E_\P[G] = \E_\Q[\gamma ((B_s)_{s\leq \tau}, \tau)].
\end{equation*}

Conversely, let $(\tilde{\Omega}, (\G_t)_{t \ge 0}, \Q)$ be a filtered probability space with a Brownian motion $B$ and a finite stopping time $\tau$, such that $B_\tau \sim \mu$ and $B_{\cdot \wedge \tau}$ is a uniformly integrable martingale, and define $(S_t := B_{(t/(1-t)) \wedge \tau})_{t\in[0,1]}$. Then $S$ is a martingale on $[0,1]$ with $\langle S \rangle_1 = \tau$, and writing $\P$ for the law of $S$ we get
\begin{equation*}
  \E_\Q[\gamma ((B_s)_{s\leq \tau}, \tau)] = \E_\P[ \tilde{G}\circ \tntt(S, 1)] = \E_\P[G].
\end{equation*}
To conclude, we arrive at the following observation:

\begin{lemma}
  The value $P$ defined in \eqref{eq:primal} is given by
  \begin{equation}\label{eq:primal2}
    P = P^\ast :=\sup\left\{\E_\Q [\gamma ((B_s)_{s\leq \tau}, \tau)]\,:\, \begin{array}{l}(\tilde{\Omega}, (\G_t)_{t \ge 0}, \Q) \in \mathfrak{F}, \tau \in \mathfrak{T}((\G_t)_{t \ge 0}),  \\B_\tau \sim \mu, B_{\cdot \wedge \tau} \text{ is a u.i. martingale}
    \end{array}\right\},
  \end{equation}
  where $\mathfrak{F}$ denotes all filtered probability spaces supporting a Brownian motion $B$ and\\
  $\mathfrak{T}((\G_t)_{t\ge 0})$ is the set of $(\G_t)_{t \ge 0}$-stopping times.
\end{lemma}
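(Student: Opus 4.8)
The plan is to prove the two inequalities $P \le P^*$ and $P^* \le P$ separately. Both directions are already sketched as explicit constructions in the two paragraphs preceding the statement, so the content of the proof is just the verification that the objects produced by these constructions are feasible for the respective problems and that the two functionals coincide along the construction.

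For $P \le P^*$, fix a martingale measure $\P$ on $C_{\mathrm{qv}}[0,1]$ with $S_1(\P) = \mu$. Since the paths live in $\Omega^{\mathrm{qv}}$ restricted to $[0,1]$ and $S$ is a continuous $\P$-martingale with $S_1 \in L^1(\P)$ (because $\mu$ has finite first moment), $S$ is a uniformly integrable martingale on $[0,1]$, and its pathwise quadratic variation $\langle S\rangle$ in the sense of $\Omega^{\mathrm{qv}}$ agrees $\P$-almost surely with the quadratic variation of $S$ as a continuous semimartingale: indeed, $V^n \to \langle S\rangle$ locally uniformly by the definition of $\Omega^{\mathrm{qv}}$, while for the continuous $\P$-martingale $S$ the same sequence $V^n$ converges to the semimartingale quadratic variation uniformly on compacts in probability, so the two limits must coincide $\P$-almost surely. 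Consequently the normalizing time transformation $\ntt$ realizes exactly the Dambis Dubins-Schwarz time change; enlarging the stochastic basis if necessary so that it supports a Brownian motion running past $\langle S\rangle_1$, one obtains $(\tilde\Omega,(\G_t)_{t\ge0},\Q)$ carrying a Brownian motion $B$ with $\ntt(S) = B_{\cdot\wedge\tau}$ for $\tau := \langle S\rangle_1$, where $\tau$ is a $(\G_t)$-stopping time (using, as noted in the text, that $\langle S\rangle_1$ is a stopping time for $(\F^S_{\tau_t})$), $B_\tau = \ntt(S)_{\langle S\rangle_1} = S_1 \sim \mu$, and $B_{\cdot\wedge\tau}$ is a uniformly integrable martingale. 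Since, by the definitions of $r$ and $\tntt$, we have $G(\omega) = \gamma\big(\ntt(\omega)_{\upharpoonright[0,\langle\omega\rangle_1]},\langle\omega\rangle_1\big)$, this yields $\E_\P[G] = \E_\Q[\gamma((B_s)_{s\le\tau},\tau)] \le P^*$, and taking the supremum over $\P$ gives $P \le P^*$.

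For $P^* \le P$, let $(\tilde\Omega,(\G_t)_{t\ge0},\Q)$ carry a Brownian motion $B$ and a finite stopping time $\tau$ with $B_\tau \sim \mu$ and $B_{\cdot\wedge\tau}$ a uniformly integrable martingale. Set $S_t := B_{(t/(1-t))\wedge\tau}$ for $t\in[0,1)$ and $S_1 := \lim_{t\uparrow1}S_t = B_\tau$; the limit exists $\Q$-almost surely by uniform integrability, so $S$ has continuous paths on $[0,1]$, and extending it constantly on $[1,\infty)$ one checks that $\Q$-almost surely it is the restriction of an element of $\Omega^{\mathrm{qv}}$, with $\langle S\rangle_t = (t/(1-t))\wedge\tau$, so that $\langle S\rangle_1 = \tau$ and $\ntt(S) = B_{\cdot\wedge\tau}$. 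Hence $S$ is a continuous martingale on $[0,1]$ and $\tntt(S,1) = (\ntt(S),\langle S\rangle_1)$, whence $G(S) = \tilde G\circ\tntt(S,1) = \gamma((B_s)_{s\le\tau},\tau)$ $\Q$-almost surely. Taking $\P$ to be the law of $S$ on $C_{\mathrm{qv}}[0,1]$, $\P$ is a martingale measure with $S_1(\P) = \mu$ and $\E_\P[G] = \E_\Q[\gamma((B_s)_{s\le\tau},\tau)]$, so that the supremum defining $P^*$ is at most $P$. Combining the two inequalities gives $P = P^*$.

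I expect the only genuinely delicate point to be the identification, for an arbitrary martingale measure $\P$, of the pathwise quadratic variation on $\Omega^{\mathrm{qv}}$ with the semimartingale quadratic variation of $S$ under $\P$ --- this is what guarantees that $\ntt$ performs the Dambis Dubins-Schwarz time change $\P$-almost surely --- together with the accompanying bookkeeping: that $\tau = \langle S\rangle_1$ is a stopping time in the time-changed filtration, and that the Dambis Dubins-Schwarz theorem may be applied after enlarging the space to accommodate a Brownian motion beyond time $\langle S\rangle_1$ (which is relevant because $\langle S\rangle_1$ need not be unbounded). By contrast, the measurability of $G = \gamma\circ r\circ\tntt(\cdot,1)$ and the fact that all expectations above are well-defined with values in $[-\infty,\infty)$ --- which holds since $\gamma$ is bounded from above --- are routine consequences of the standing assumptions on $\gamma$ and $\mu$.
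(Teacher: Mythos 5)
Your proof is correct and follows essentially the same route as the paper, whose argument is precisely the two constructions sketched before the lemma: the Dambis Dubins--Schwarz time change (with $\tau=\langle S\rangle_1$) for $P\le P^\ast$, and the deterministic time change $S_t=B_{(t/(1-t))\wedge\tau}$ for $P^\ast\le P$. Your added care about identifying the pathwise quadratic variation on $\Omega^{\mathrm{qv}}$ with the semimartingale quadratic variation under $\P$, and about enlarging the space to carry a Brownian motion beyond $\langle S\rangle_1$, only makes explicit points the paper leaves implicit.
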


By \cite[Lemma~3.11]{BeCoHu14}, the value $P^\ast$ is independent of the particular probability space as long as it supports a Brownian motion and a $\G_0$-measurable uniformly distributed random variable. Therefore, it is sufficient to consider the probability space $(\ol{\Omega},\ol{\F},(\ol{\F}_t)_{t \ge 0},\ol{\W})$, where we take $\ol{\Omega} := C(\R_+) \times [0,1]$, $\F=(\F_t)_{t\geq 0}$ to be the natural filtration on $C(\R_+)$, $\ol{\F}$ to be the completion of  $\F \otimes \mathcal{B}([0,1]), \ol{\W}(A_1\times A_2): = \W(A_1) \Leb(A_2)$, and  $\ol{\F}_t$ the usual augmentation of $\F_t \otimes \sigma([0,1])$. Here, $\Leb$ denotes the Lebesgue measure and $\W$ the Wiener measure. We will write $\bar B=(\bar B_t)_{0\leq t}$ for the canonical process on $\bar \Omega$, that is $\bar B_t(\omega,u):=\omega(t)$. 

Given random times $\tau, \tau'$ on $\ol \Omega$ and a bounded continuous function $f\colon C(\R_+)\times \R_+\to \R$ we define 
\begin{equation*}
  d_f(\tau,\tau'):= \left| \E_{\ol\W} [ f(\tau)-f(\tau')]\right| = \left| \int  [ f(\omega, \tau(\omega, x))-f(\omega, \tau'(\omega, x))] \ol\W(\mathrm{d} \omega, \mathrm{d} x) \right|.
\end{equation*}
We then identify $\tau$ and $\tau'$ if $d_f(\tau,\tau')=0$ for all
continuous and bounded $f$. On the resulting space of equivalence classes denoted by $\mathsf{RT}$, the family of semi-norms $(d_f)_f$ gives rise to a Polish topology. An equivalent interpretation of this space is to consider the measures on $C(\R_+)\times \R_+$ induced by 
\begin{equation}\label{eq:taumeasure}
  \nu_\tau (A\times B) = 
  \int \1_{\omega \in A, \tau(\omega,x) \in B} \ol\W(\mathrm{d} \omega, \mathrm{d} x).
\end{equation}
The topology above corresponds to the topology of weak convergence of the corresponding measures. A random time $\tau$ is a $\ol\F$ stopping time if and only if for any $f\in C(\R_+)$ supported on $[0,t]$ the random variable $f(\tau)$ is $\ol\F_t$ measurable which in turn holds if and only if for all $g\in C_b(C(\R_+))$ we have (see also \cite[Theorem~3.8]{BeCoHu14})
\begin{equation}\label{eq:STchar}
  \E_{\ol\W}[f(\tau)(g-\E_{\W}[g|\F_t])]=\int f(s)(g-\E_{\W}[g|\F_t])(\omega)~\nu_\tau(\d\omega,\d s)=0,
\end{equation}
where on the left hand side we interpret $g-\E_{\W}[g|\F_t]$ as a random variable on the extension $\ol\Omega$ via $(g-\E_{\W}[g|\F_t])(\omega,x)=(g-\E_{\W}[g|\F_t])(\omega)$. As a consequence, for a stopping time~$\tau$ on $\ol \Omega$ all elements of the respective equivalence class are stopping times. We will call this equivalence class, as well as  (by abuse of notation) its representatives \emph{randomized stopping times} (in formula: $\RST$).

By the same argument as above, there exists a continuous compensating process $\zeta^1 \colon\S\to \R$ such that $(\phi(B_t) - \zeta_t)$ is a martingale under $\W$. We write $\RST(\mu)$ for the set of randomized stopping times which embed a given measure $\mu$ (that is $\bar B_\tau \sim \mu$ and $B_{\cdot \wedge \tau}$ is a uniformly integrable martingale), and such that $\E_{\ol\W}[\zeta^1_\tau]<\infty$, this last condition also being equivalent to $\E_{\ol\W}[\zeta^1_\tau] = V$ for $V = \int\phi(x) \,\mu(\d x)$. It it is then not hard to show that $\RST(\mu)$ is compact, see \cite[Theorem~3.14, and Section~7.2.1]{BeCoHu14}. Thereby, we have turned the optimization problem~\eqref{eq:primal} into the primal problem of the optimal Skorokhod embedding problem
\begin{equation}\label{eq:primalSEP}
  P^\ast=\sup_{\tau\in\RST(\mu)}\E_{\ol\W}[\gamma((\bar B_s)_{s\leq \tau}, \tau)].
\end{equation}

Step~2:
In \cite{BeCoHu14} a duality result for \eqref{eq:primalSEP} is shown. To state it (and in what follows), it will be convenient to fix a particularly nice version of the conditional expectation on the Wiener space $(C(\R_+),\F,\W)$.

\begin{definition}\label{EAverage}
  Let $X\colon C(\R_+)\to \R$ be a measurable function which is bounded or positive. Then we define $\E_\W[X|\F_t]$ to be the unique $\F_t$-measurable function satisfying 
  \begin{equation*}
    \E_\W[X|\F_t](\omega) := \textstyle \int X((\omega_{\llcorner[0,t]})\oplus \tilde \omega)\, \W(\dds \tilde \omega),
  \end{equation*} 
  where $(\omega_{\upharpoonright [0,t]}) \oplus \tilde \omega$ is the concatenation of $\omega_{\upharpoonright [0,t]}$ and $\tilde \omega$, that is $(\omega_{\upharpoonright [0,t]}) \oplus \tilde \omega(r) := \1_{r \le t} \omega(r) + \1_{r> t} (\omega(t)+\tilde\omega(r-t))$. Similarly, for bounded or positive $X\colon\Omega^\mathrm{qv}\to\R$ we define $\E_\W[X|\F^\mathrm{qv}_t]$ to be the unique $\F^\mathrm{qv}_t$-measurable function satisfying 
  \begin{equation*}
    \E_\W[X|\F^\mathrm{qv}_t](\omega)= \textstyle \int X((\omega_{\llcorner[0,t]})\oplus \tilde \omega)\, \W(\dds \tilde \omega).
  \end{equation*}
  Then $\E_\W[X|\F_t](\omega)$ as well as $\E_\W[X|\F^\mathrm{qv}_t](\omega)$ depend only on $\omega_{\upharpoonright[0,t]}$, and in particular we can (and will) interpret the conditional expectation also as a function on $C_{\mathrm{qv}}[0,t] := \{ \omega_{\upharpoonright[0,t]}\,:\, \omega \in \Omega^{\mathrm{qv}}\}$.
\end{definition}

We equip $\Omega^{\mathrm{qv}}$ with the topology of uniform convergence on compacts. Note that then $\Omega^{\mathrm{qv}}$ is a metric space, but it is not complete due to the fact that it is possible to approximate paths without quadratic variation uniformly by typical Brownian sample paths.

\begin{proposition}[{\cite[Proposition 3.5]{BeCoHu14}}]\label{prop:very cont}
  Let $X\in C_b(C(\R_+))$. Then $X_t(\omega):=\E_\W[X|\mathcal F_t](\omega)$ defines a $\CS$-continuous martingale on $(C(\R_+),(\F_t), \W)$. By restriction, it is also a $\CS$-continuous martingale on $(\Omega^\mathrm{qv},(\F^\mathrm{qv}_t), \W)$ 
\end{proposition}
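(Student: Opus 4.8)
\medskip\noindent\emph{Proof plan.} The plan is to realize $X_t$ as the evaluation of a single fixed function on the space $\CS$ of stopped paths and to deduce its $\CS$-continuity from the continuity of path concatenation. For $(f,s)\in\CS$ let $f\oplus\tilde\omega$ denote the concatenation of $f\in C[0,s]$ with $\tilde\omega\in C(\R_+)$, defined exactly as in Definition~\ref{EAverage}, and put $H(f,s):=\int_{C(\R_+)}X(f\oplus\tilde\omega)\,\W(\d\tilde\omega)$. By Definition~\ref{EAverage} one has $X_t(\omega)=H(\omega_{\upharpoonright[0,t]},t)=H\circ r(\omega,t)$, and $X_t(\omega)$ depends only on $\omega_{\upharpoonright[0,t]}$; hence it suffices to show that $H$ is continuous on $(\CS,d_\CS)$. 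The $\CS$-continuity of the process then follows by definition, and in particular its paths $t\mapsto X_t(\omega)$ are continuous because $t\mapsto r(\omega,t)$ is $d_\CS$-continuous.

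To prove that $H$ is continuous, I would take $(f_n,s_n)\to(f,s)$ in $d_\CS$ and, passing to a subsequence, assume $s_n\le s$ for all $n$ (the case $s_n\ge s$ being symmetric). The key step is to check that $f_n\oplus\tilde\omega\to f\oplus\tilde\omega$ uniformly on compact time intervals for every fixed $\tilde\omega\in C(\R_+)$: fixing $T>0$ and comparing the two glued paths on the regimes $\{r\le s_n\}$, $\{s_n<r\le s\}$, $\{r>s\}$, one estimates the differences using the three quantities appearing in $d_\CS((f_n,s_n),(f,s))$ — that $f_n\to f$ uniformly on $[0,s_n]$, that $\sup_{s_n\le u\le s}|f(u)-f_n(s_n)|\to 0$ (so in particular $f_n(s_n)\to f(s)$), and that $s_n\to s$ — together with $\tilde\omega(0)=0$ and the continuity of $\tilde\omega$ to bound $|\tilde\omega(v)|$ for $v\le s-s_n$, and the uniform continuity of $\tilde\omega$ on $[0,T+1]$ to bound $|\tilde\omega(r-s_n)-\tilde\omega(r-s)|$ uniformly in $r\le T$. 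Since $X$ is bounded and continuous for locally uniform convergence, dominated convergence gives $H(f_n,s_n)\to H(f,s)$. I expect this step — matching the three terms of $d_\CS$ to the three time-regimes and obtaining smallness \emph{uniformly} in $r$, in particular controlling the freshly appended Brownian segment near its origin — to be the main obstacle; everything else is routine.

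Finally, for the martingale property each $X_t$ is $\F_t$-measurable and bounded by $\norm{X}_\infty$, hence integrable; for $s\le t$, applying Definition~\ref{EAverage} twice and identifying $\big((\omega_{\upharpoonright[0,s]})\oplus\hat\omega\big)_{\upharpoonright[0,t]}\oplus\tilde\omega$ with $(\omega_{\upharpoonright[0,s]})\oplus(\hat\omega_{\upharpoonright[0,t-s]}\oplus\tilde\omega)$, the quantity $\E_\W[X_t\mid\F_s](\omega)$ becomes an iterated integral in $\hat\omega$ and $\tilde\omega$ which, since the concatenation map pushes $\W\otimes\W$ forward to $\W$ (equivalently $\hat\omega_{\upharpoonright[0,t-s]}\oplus\tilde\omega$ has law $\W$, i.e.\ the stationarity and independence of Brownian increments), collapses to $\int X((\omega_{\upharpoonright[0,s]})\oplus\bar\omega)\,\W(\d\bar\omega)=X_s(\omega)$; together with the path-continuity from the first step this makes $(X_t)$ a continuous $\W$-martingale. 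Since $\W$ is carried by $\Omega^{\mathrm{qv}}$, the two conditional expectations in Definition~\ref{EAverage} agree $\W$-a.s.\ and $\F^{\mathrm{qv}}_t$ is the corresponding trace $\sigma$-algebra, while the restriction of the $d_\CS$-continuous function $H$ to the stopped paths arising from $\Omega^{\mathrm{qv}}$ remains continuous; this yields the statement on $(\Omega^{\mathrm{qv}},(\F^{\mathrm{qv}}_t),\W)$.
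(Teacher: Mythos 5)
Your argument is correct, and since the paper itself gives no proof of this proposition (it is quoted verbatim from \cite[Proposition~3.5]{BeCoHu14}), the only comparison to make is with that reference, whose proof your proposal essentially reproduces: writing $X_t=H\circ r$ with $H(f,s)=\int X(f\oplus\tilde\omega)\,\W(\d\tilde\omega)$, proving $d_\CS$-continuity of $H$ via the continuity of concatenation (matching the three terms of $d_\CS$ to the three time-regimes) and dominated convergence, and obtaining the martingale property from the fact that concatenation pushes $\W\otimes\W$ to $\W$. The restriction to $(\Omega^{\mathrm{qv}},(\F^{\mathrm{qv}}_t),\W)$ is handled correctly as well, so no gaps.
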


Then the duality for the optimal Skorokhod embedding reads:

\begin{theorem}\label{thm:SEPDual}
  Let $\gamma\colon \CS \to \R$ be upper semi-continuous and bounded from above. We put
  \begin{equation*}
    D^\ast :=\inf\left\{ p:
    \begin{array}{l}
    \exists \alpha \ge 0, \psi \in \mathcal{E}^0, m \in C_b(C(\R_+)) \text{ s.t. } \E_\W[m] = 0 \text{ and } \forall (\omega, t)\in C(\R_+)\times \R_+ \\
    p + \E_\W[m|\F_t](\omega) + \alpha Q(\omega,t) + \psi(B_t(\omega)) \geq \gamma(\omega,t)
    \end{array}
    \right\},
  \end{equation*}
  where we wrote $Q(\omega,t):=\varphi(B_t(\omega)) - 1/2 \int_0^t \varphi''(B_s(\omega)) \dd s$. 
  Let $P^\ast$ be as defined in~\eqref{eq:primal2}. Then one has
  \begin{equation*} 
    P^\ast = D^\ast.
  \end{equation*}
\end{theorem}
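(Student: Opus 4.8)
\textbf{Proof plan for Theorem~\ref{thm:SEPDual} ($P^\ast = D^\ast$).}

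The plan is to establish the two inequalities separately, with the ``easy'' direction $P^\ast \le D^\ast$ coming from a weak-duality/verification argument and the ``hard'' direction $P^\ast \ge D^\ast$ from a minimax/Hahn--Banach argument exploiting the compactness of $\RST(\mu)$. For weak duality, fix $\tau \in \RST(\mu)$ and admissible dual variables $(\alpha, \psi, m)$ with $\E_\W[m]=0$ and the pointwise inequality $p + \E_\W[m|\F_t](\omega) + \alpha Q(\omega,t) + \psi(B_t(\omega)) \ge \gamma(\omega,t)$. Evaluating at $(\bar B, \tau)$ on $\ol\Omega$ and taking $\E_{\ol\W}$-expectation, I would argue that each of the three ``hedging'' terms has nonpositive expectation: $\E_{\ol\W}[\E_\W[m|\F_\tau]] = \E_\W[m] = 0$ because $\tau$ is a randomized stopping time and $\E_\W[m|\F_t]$ is a (bounded, $\CS$-continuous by Proposition~\ref{prop:very cont}) martingale closed by $m$ — here the optional stopping theorem for uniformly integrable martingales applies, and one uses the randomized-stopping-time characterization \eqref{eq:STchar}; the term $\E_{\ol\W}[Q(\bar B, \tau)] = \E_{\ol\W}[\varphi(\bar B_\tau)] - \E_{\ol\W}[\zeta^1_\tau] = V - V = 0$ by the defining property of $\RST(\mu)$; and $\E_{\ol\W}[\psi(B_\tau)] = \int \psi \dd\mu = 0$ since $\psi \in \mathcal{E}^0$ and $\bar B_\tau \sim \mu$. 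Hence $\E_{\ol\W}[\gamma((\bar B_s)_{s\le \tau}, \tau)] \le p$, and taking sup over $\tau$ and inf over $p$ gives $P^\ast \le D^\ast$. (Some care is needed with integrability of $\psi(B_\tau)$, but this is exactly controlled by the bound $|\psi|/(1+\varphi)$ bounded together with $\E_{\ol\W}[\varphi(\bar B_\tau)] < \infty$.)

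For the reverse inequality $P^\ast \ge D^\ast$, I would set this up as a Fenchel--Rockafellar duality or a Sion minimax theorem on the space $\RST(\mu)$, which is compact by \cite[Theorem~3.14]{BeCoHu14}. The natural route: consider the primal functional $\tau \mapsto \E_{\ol\W}[\gamma((\bar B_s)_{s\le\tau},\tau)]$, which is upper semi-continuous on the compact set $\RST(\mu)$ (since $\gamma$ is u.s.c.\ bounded above and $\tau \mapsto \nu_\tau$ is the weak topology — one approximates $\gamma$ from above by $\CS$-continuous bounded functions and passes to the limit). Then I would relax the three constraints defining $\RST(\mu)$ — the marginal constraint $\bar B_\tau \sim \mu$, the martingale/optional-stopping constraint, and the stopping-time constraint itself — by introducing Lagrange multipliers living respectively in a space of test functions against $\mu$ (giving $\psi$), in the span of the $Q(\cdot,t)$-type functions (giving $\alpha$, or rather using the single function $Q$ which already encodes the compensator identity), and in the space of bounded continuous martingale increments $\{g - \E_\W[g|\F_t]\}$ which by \eqref{eq:STchar} cuts out the randomized stopping times (giving $m = g - \E_\W[g]$, say). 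Applying a minimax theorem — justified by compactness of $\RST(\mu)$ on one side and convexity/linearity in the multipliers on the other — yields the existence of optimal (or $\eps$-optimal) dual variables realizing $D^\ast$, and the value equals $P^\ast$.

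The main obstacle, as usual in these pathwise-duality arguments, is the reverse inequality, and specifically two technical points within it. First, one must identify the correct topological vector space pairing so that the constraint ``$\tau$ is a randomized stopping time embedding $\mu$ with $\E_{\ol\W}[\zeta^1_\tau]<\infty$'' is expressed as the annihilator of a set of \emph{continuous bounded} functionals of the form $m + \alpha Q + \psi \circ B_\cdot$ — the characterization \eqref{eq:STchar} is the key tool, and one must check that restricting to $g \in C_b(C(\R_+))$ (rather than all bounded measurable $g$) loses nothing, which is where Proposition~\ref{prop:very cont} ($\CS$-continuity of $\E_\W[m|\F_t]$) is essential. Second, there is a noncompactness/integrability issue because $\mu$ has only a first moment: the space of candidate stopping times is compact only after the $\E_{\ol\W}[\zeta^1_\tau] = V$ normalization, and the function $Q$ is unbounded, so one cannot pair it against arbitrary finite measures — the fix is precisely that $\RST(\mu)$ forces $\E_{\ol\W}[Q(\bar B,\tau)] = 0$, so $Q$ acts as a ``free'' direction and the multiplier $\alpha$ can be taken $\ge 0$ (or of either sign) without destroying tightness. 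In fact I expect the cleanest writeup to cite \cite[Theorem~4.2 (or the main duality theorem)]{BeCoHu14} essentially verbatim, after checking that the functional $\gamma$ and the extra ``European'' relaxation here fit its hypotheses; the real content of this theorem's proof is then the bookkeeping that translates the abstract SEP duality of \cite{BeCoHu14} into the present notation with the compensator $Q$ and the class $\mathcal{E}^0$.
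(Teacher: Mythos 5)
Your proposal is correct and in the end coincides with the paper's own proof: the paper simply cites \cite[Theorem~4.2 \& Proposition~4.3]{BeCoHu14} together with the discussion before \cite[Theorem~7.3]{BeCoHu14}, the latter being exactly the point you flag about replacing $\alpha(\omega(t)^2-t)$ by the compensator term $\alpha Q(\omega,t)$ and accommodating $\mathcal{E}^0$. Your sketched weak-duality and minimax arguments are a faithful outline of what lies behind that citation, so the approaches are essentially the same.
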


\begin{proof}
  This is essentially a restatement of \cite[Theorem~4.2 \& Proposition~4.3 (c.f. Proof of Theorem~4.2)]{BeCoHu14}, combined with the discussion before \cite[Theorem~7.3]{BeCoHu14}, which enables us to modify the statement to include the term $\alpha Q(\omega,t)$ instead of $\alpha(\omega(t)^2-t/2)$.
\end{proof}

By Proposition~\ref{prop:very cont} and the fact that $\Omega^{\mathrm{qv}}$ is dense in $C(\R_+)$, we see that the value $D^\ast$ equals
\begin{equation*}
  D^{\ast,\mathrm{qv}} :=\inf\left\{ p:
  \begin{array}{l}
    \exists \alpha \ge 0, \psi \in \mathcal{E}^0, m \in C_b(\Omega^{\mathrm{qv}}) \text{ s.t. } \E_\W[m] = 0 \text{ and } \forall (\omega, t)\in \Omega^{\mathrm{qv}}\times \R_+ \\
    p + \E_\W[m|\F_t^{\mathrm{qv}}](\omega) + \alpha Q(\omega,t) + \psi(B_t(\omega)) \geq \gamma(\omega,t)
  \end{array}
  \right\}.
\end{equation*}

Step~3: Let now $p>D*=P^\ast=P$. Then Theorem~\ref{thm:SEPDual} gives us a function $\psi \in \mathcal{E}^0$, a constant $\alpha \geq 0$, and a continuous bounded function $m\colon \Omega^{\mathrm{qv}} \to \R$ with $\E_\W[m]=0$ such that for all $(\omega,t) \in \Omega^{\mathrm{qv}} \times \R_+$
\begin{equation}\label{eq:martingale-ineq}
  M_t(\omega) := \E_\W[m|\F^{\mathrm{qv}}_t](\omega) \geq -p-\psi(B_t(\omega)) - \alpha Q(\omega,t) + \gamma(\omega,t).
\end{equation}
Consider now the functional $\tilde m \colon \Omega^{\mathrm{qv}} \to \R$ given by
\begin{equation*}
  \tilde m:= m \circ \ntt
\end{equation*}
which is $\G$-invariant, i.e.\ invariant under all strictly increasing unbounded time-changes, and satisfies $\E_\W[\tilde m]=\E_\W[m]=0$. Denote by $m_0$ the supremum of $|m(\omega)|$ over all $\omega \in \Omega^{\mathrm{qv}}$. Then $m_0+m \ge 0$, and if we fix $\varepsilon >0$ and apply Theorem~\ref{thm:vovk-dds} in conjunction with Remark~\ref{rmk:outer}, we obtain a sequence of simple strategies $(\tilde H^n)\subseteq \mathcal{H}_{m_0+\varepsilon}$ such that
\begin{equation*}
  \liminf_{t \to \infty} \liminf_{n\to \infty} \varepsilon+(\tilde H^n\cdot B)_t(\omega) \geq \tilde m(\omega) \1_{\{\langle B \rangle_\infty = \infty\}}(\omega), \qquad \omega \in \Omega^{\mathrm{qv}}.
\end{equation*}
By stopping we may suppose that $(\tilde H^n\cdot B)_t(\omega) \le m_0$ for all $(\omega,t) \in \Omega^{\mathrm{qv}} \times \R_+$. Set 
\begin{equation*}
  \tilde M_t(\omega):= (M\circ \tntt)(\omega,t), 
   \qquad (\omega,t) \in \Omega^{\mathrm{qv}}\times \R_+ .
\end{equation*}

\begin{lemma}\label{lem:cts domination}
 For all $(\omega,t) \in \Omega^{\mathrm{qv}}\times \R_+$ we have
  \begin{equation*}
    \varepsilon + \liminf_{n\to \infty}(\tilde H^n \cdot B)_t(\omega) \geq  \tilde M_t(\omega).
  \end{equation*}
\end{lemma}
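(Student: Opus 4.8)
The plan is to pass the pathwise superhedging inequality for $\tilde m$ through the normalizing time transformation and then add it to the (time-changed) martingale inequality \eqref{eq:martingale-ineq}. Recall that on $\Omega^{\mathrm{qv}}$ the process $\bar M_t(\omega) = M_{\langle\omega\rangle_t}(\ntt(\omega))$ is obtained by evaluating the $\W$-martingale $M = \E_\W[m|\F^{\mathrm{qv}}_\cdot]$ along the normalized path $\ntt(\omega)$ and at the clock $\langle\omega\rangle_t$. The key observation is that $M_s(\ntt(\omega))$ can itself be written as $m'\circ\ntt$ for a suitable functional, because $M$ is a conditional expectation with the explicit concatenation representation from Definition~\ref{EAverage}: writing $M_s(\omega') = \E_\W[m|\F^{\mathrm{qv}}_s](\omega') = \int m((\omega'_{\upharpoonright[0,s]})\oplus\tilde\omega)\,\W(\d\tilde\omega)$, and noting that $\ntt(\omega)$ has quadratic variation equal to the identity, one should identify $\tilde M_t(\omega) = M_{\langle\omega\rangle_t}(\ntt(\omega))$ with the value at time $t$ of a $\G$-invariant functional applied to $\omega$ — more precisely, $\tilde M$ is the capital process one would get by running a strategy on the normalized path. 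Since $\tilde m = m\circ\ntt$ and $M_\infty = m$ $\W$-a.s., we have $\tilde M_\infty = \tilde m$ on $\{\langle B\rangle_\infty = \infty\}$, so the superhedging statement $\varepsilon + \liminf_t\liminf_n(\tilde H^n\cdot B)_t \geq \tilde m\,\1_{\{\langle B\rangle_\infty=\infty\}}$ together with the martingale property of $\tilde M$ should give the finite-time inequality.

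Concretely, first I would show that $\tilde M$ is a supermartingale (in fact a martingale up to time $\langle B\rangle_\infty$) for typical price paths, or more robustly, that $\tilde M_t$ is dominated in the superhedging sense by its terminal value: by the optional stopping / martingale structure and the bound $|\tilde M_t|\leq m_0$, at each fixed $t$ one has $\tilde M_t(\omega) = \E_\W[\text{(terminal value)}\,|\,\text{past}]$, and since $\tilde M_\infty = \tilde m$ on the full-quadratic-variation event, the strategies $(\tilde H^n)$ that superhedge $\tilde m$ starting from $\varepsilon$ (after the stopping that ensures $(\tilde H^n\cdot B)_t\leq m_0$) must already superhedge $\tilde M_t$ at the earlier time $t$. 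The cleanest way to see this is: $\tilde M$ is a bounded martingale for $\W$, hence (Proposition~\ref{prop:properties}(2)) for typical price paths it is a "martingale" in Vovk's pathwise sense; combining with the fact that after time $\langle\omega\rangle_\infty$ (if finite) $\ntt(\omega)$ is constant so $\tilde M_t(\omega)$ is eventually constant and equals the value that $\liminf_t(\tilde H^n\cdot B)_t$ bounds. Then at a fixed $t$, using that $(\tilde H^n\cdot B)_{t}$ is $\F^{\mathrm{qv}}_t$-adapted and the superhedging runs on, one peels off to the fixed time.

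The main obstacle I expect is the rigorous identification of $\tilde M_t(\omega) = (M\circ\tntt)(\omega,t)$ with a capital process (or terminal value of one) that the time-changed strategies $(\tilde H^n)$ actually dominate at the \emph{fixed} time $t$ rather than only in the $\liminf_{t\to\infty}$ sense — i.e. transferring a statement about $\liminf_{t\to\infty}\liminf_n(\tilde H^n\cdot B)_t \geq \tilde m$ into a statement about $\liminf_n(\tilde H^n\cdot B)_t \geq \tilde M_t$ for every $t$. This should be handled by the martingale/supermartingale property of $\tilde M$ together with the a priori upper bound $(\tilde H^n\cdot B)_t\leq m_0$: indeed $\varepsilon + \liminf_n(\tilde H^n\cdot B)_t$ is (for typical paths, hence $\W$-a.s.) a supermartingale in $t$ dominating $\varepsilon + \tilde m$ in the limit, while $\tilde M_t$ is a martingale with the same terminal value, and a bounded supermartingale dominating (in limsup) the terminal value of a bounded martingale with that terminal value must dominate the martingale pathwise; the pathwise (every $\omega$, not just a.s.) version then follows because both sides are, by construction, defined for every $\omega\in\Omega^{\mathrm{qv}}$ and the exceptional set is a null set which can be absorbed into the strategies (Lemma~\ref{lem:na1}) at the cost of an additional $\varepsilon$, or handled by the $\G$-invariance which makes the relevant set time-superinvariant so that Theorem~\ref{thm:superinvariant} applies directly.
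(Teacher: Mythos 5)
Your first half is on the right track: the identification of $\tilde M_t(\omega)=(M\circ\tntt)(\omega,t)$ through the concatenation representation of Definition~\ref{EAverage}, together with $\ntt(\tilde\omega)=\tilde\omega$ and $\langle B\rangle_\infty=\infty$ $\W$-a.s., is exactly how the paper begins, establishing the pathwise identity $\tilde M_t=\E_\W[\1_{\{\langle B\rangle_\infty=\infty\}}\tilde m\,|\,\F^{\mathrm{qv}}_t]$ for \emph{every} $\omega\in\Omega^{\mathrm{qv}}$. The gap is in the second half, the transfer from the superhedging statement at $\liminf_{t\to\infty}$ to the fixed-time bound. Your proposed mechanism is an almost-sure comparison (``$\varepsilon+\liminf_n(\tilde H^n\cdot B)_t$ is a $\W$-supermartingale dominating the terminal value of the bounded martingale $\tilde M$, hence dominates it pathwise''), with the exceptional set handled either by absorbing it into the strategies via Lemma~\ref{lem:na1} or by time-superinvariance and Theorem~\ref{thm:superinvariant}. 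Neither repair works for the statement as it stands: the lemma asserts the inequality for the \emph{given} strategies $(\tilde H^n)$ and the given $\varepsilon$ at every single $\omega\in\Omega^{\mathrm{qv}}$ (including paths like $\langle\omega\rangle_t=2t$ that are Wiener-atypical), so you may not modify the strategies; and the set where the inequality could fail involves a fixed time $t$ and the capital processes $(\tilde H^n\cdot B)_t$, which are not time-invariant functionals, so it is not time-superinvariant and Theorem~\ref{thm:superinvariant} does not apply. Likewise, Proposition~\ref{prop:properties}(2) only compares measures of sets; it does not upgrade the $\W$-martingale $\tilde M$ to a pathwise ``martingale for typical price paths'' in a way usable here.

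The missing idea is that no almost-sure argument is needed at all: because $\tilde M_t(\omega)$ is, by Definition~\ref{EAverage}, an integral over Wiener continuations $\tilde\omega$ of the fixed initial segment $\omega_{\upharpoonright[0,t]}$, you can apply the superhedging inequality $\varepsilon+\liminf_s\liminf_n(\tilde H^n\cdot B)_s\ge\tilde m\,\1_{\{\langle B\rangle_\infty=\infty\}}$ \emph{to each concatenated path} $\omega_{\upharpoonright[0,t]}\oplus\tilde\omega$ and then integrate in $\tilde\omega$ with respect to $\W$. Splitting the capital as $(\tilde H^n\cdot B)_s=(\tilde H^n\cdot B)_t+(\tilde H^n\cdot B)_t^s$, the first term is determined by $\omega_{\upharpoonright[0,t]}$ and comes out of the integral, while the future-increment part $(\liminf_n(\tilde H^n\cdot B)_t^s)_{s\ge t}$ is a bounded $\W$-supermartingale started at $0$ — boundedness being precisely the reason the paper stops the strategies so that $(\tilde H^n\cdot B)\le m_0$ — whence its contribution under the $\W$-integral is nonpositive. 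This yields $\tilde M_t(\omega)\le\varepsilon+\liminf_n(\tilde H^n\cdot B)_t(\omega)$ for every $(\omega,t)$ directly, which is the step your plan does not supply.
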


\begin{proof}
  We claim that $\tilde M_t = \E_\W[ \1_{\{\langle B \rangle_\infty = \infty\}} \tilde m |\F^{\mathrm{qv}}_t]$. Indeed we have 
  \begin{equation*}
    \tilde M_t\big( \omega_{\upharpoonright [0,t]}\oplus \tilde \omega, t\big) = (M\circ \tntt) \big( \omega_{\upharpoonright [0,t]}\oplus \tilde \omega, t\big)= M_{\langle B\rangle_t}\big(\ntt( \omega_{\upharpoonright [0,t]}\oplus \tilde \omega)\big),
  \end{equation*}
  where the latter quantity actually does not depend on $\tilde \omega$, i.e.\ with a slight abuse of notation we may write it as $M_{\langle B \rangle_t} \big(\ntt(\omega_{\upharpoonright[0,t]})\big)$. Also, we have
  \begin{align*}
    \E_\W[ \1_{\{\langle B \rangle_\infty = \infty\}} \tilde m |\F^{\mathrm{qv}}_t](\omega_{\upharpoonright[0,t]}) & = \E_\W[ \1_{\{\langle B \rangle_\infty = \infty\}} m\circ \ntt|\F^{\mathrm{qv}}_t](\omega_{\upharpoonright[0,t]})\\
    &= \int  \1_{\{ \langle B \rangle_\infty = \infty\}}(\omega_{\upharpoonright[0,t]}\oplus \tilde \omega) (m\circ\ntt) (\omega_{\upharpoonright[0,t]}\oplus \tilde \omega)\, \W(\dds \tilde \omega)\\
    &= \int \1_{\{\langle B \rangle_\infty = \infty\}}(\tilde \omega) m\big(\ntt (\omega_{\upharpoonright[0,t]})\oplus \ntt(\tilde \omega) \big)\, \W(\dds \tilde \omega)\\
    & = \int m\big(\ntt (\omega_{\upharpoonright[0,t]})\oplus \tilde\omega \big)\, \W(\dds\tilde\omega) \\
    & = M_{\langle B\rangle_t}\big(\ntt(\omega_{\upharpoonright[0,t]})\big),
  \end{align*}
  where we used that $\W$-almost surely $\tilde{\omega} = \ntt(\tilde \omega)$ and $\langle B \rangle_\infty = \infty$. Writing $(\tilde H^n \cdot B)_t^s := (\tilde H^n \cdot B)_s - (\tilde H^n \cdot B)_t$, we thus find
  \begin{align*}
    \tilde M_t & = \E_\W[\1_{\{\langle B \rangle_\infty = \infty\}} \tilde m|\F^{\mathrm{qv}}_t] \leq \varepsilon + \E_\W[  \liminf_{s\to \infty} \liminf_{n\to \infty} (\tilde H^n \cdot B)_s|\F^{\mathrm{qv}}_t] \\
    & = \varepsilon + \E_\W[\liminf_{s\to \infty} \liminf_{n\to \infty} ((\tilde H^n \cdot B)_t + (\tilde H^n \cdot B)_t^s) |\F^{\mathrm{qv}}_t]\\
    &= \varepsilon + \liminf_{n\to \infty} (\tilde H^n \cdot B)_t + \E_\W[ \liminf_{s\to \infty} \liminf_{n\to \infty} (\tilde H^n \cdot B)_t^s|\F^{\mathrm{qv}}_t].
  \end{align*}
  Now it is easily verified that $(\liminf_n (\tilde H^n \cdot B)_t^s)_{s \ge t}$ is a bounded $\W$-supermartingale started in 0 (recall that $-m_0 - \varepsilon \le (\tilde H^n\cdot B)_s(\omega) \le m_0$ for all $(\omega,s) \in \Omega^{\mathrm{qv}} \times \R_+$, which yields $|(\tilde H^n \cdot B)_t^s(\omega)| \le 2 m_0 + \varepsilon$ for all $(\omega,s) \in \Omega^{\mathrm{qv}} \times \R_+$), and therefore the conditional expectation on the right hand side is nonpositive, which concludes the proof.
\end{proof}

We are now ready to show that $D\leq D^*$ and thus to prove the main result (Theorem~\ref{HedgingDual}) of this section.

\begin{proof}[Proof of Theorem~\ref{HedgingDual}]
  Lemma~\ref{lem:cts domination} and \eqref{eq:martingale-ineq} show that
  \begin{equation*}
    \varepsilon + \liminf_{n \to \infty} (\tilde H^n \cdot
    B)_t(\omega) \geq -p -\psi((B\circ \tntt)(\omega,t))- \alpha (Q\circ \tntt) (\omega,t) + \gamma\circ\tntt(\omega,t)
  \end{equation*}
  for all $(\omega,t) \in \Omega^{\mathrm{qv}} \times \R_+$. Noting
  that $\psi((B \circ \tntt) (\omega,t))=\psi(B_t(\omega))$ and $Q\circ\tntt(\omega,t)= \varphi(B_t(\omega)) - \zeta_t(\omega)$, we get
  \begin{align*}
    & p+ \varepsilon + \liminf_{n \to \infty} (\tilde H^n \cdot B)_t(\omega) + \psi(B_t(\omega)) + \alpha(\varphi(B_t(\omega)) - \zeta_t(\omega))\big)  \ge \gamma\circ \tntt(\omega,t)
  \end{align*}
  for all $(\omega, t) \in \Omega^{\mathrm{qv}} \times \R_+$. 
  It now suffices to apply F\"ollmer's pathwise It\^o formula~\cite{Fo81} along the dyadic Lebesgue partition defined in Section~\ref{sec:setting} to obtain a sequence of simple strategies $(G^n)\subseteq \mathcal{Q}_{1,\alpha}$ such that $\lim_{n \to \infty} (G^n\cdot B)_t(\omega) = \alpha(\varphi(B_t(\omega)) - \zeta_t(\omega))$ for all $(\omega,t) \in \Omega^{\mathrm{qv}} \times \R_+$; to make the strategies $(G^n)$ admissible it suffices to stop once the wealth at time $t$ drops below $-1-\alpha\zeta_t(\omega) < \alpha(\varphi(B_t(\omega)) - \zeta_t(\omega))$. Hence, setting $H^n := \tilde{H}^n + G^n$, we have established that there exist $(H^n)\subseteq \mathcal{Q}_{m_0+\varepsilon+1,\alpha}$ and $\psi \in \mathcal{E}^0$ such that 
  \begin{equation*}
    p+\varepsilon + \liminf_{n \to \infty} (H^n\cdot B)_t(\omega)+\psi(B_t(\omega)) \geq \gamma\circ\tntt(\omega,t)
  \end{equation*}
  for all $(\omega,t) \in \Omega^{\mathrm{qv}} \times \R_+$. Now for fixed $t \in \R_+$ the functionals on both sides only depend on $\omega_{\upharpoonright[0,t]}$, 
  so we can consider them as functionals on $C_{\mathrm{qv}}[0,t]$, and thus the inequality holds in particular for all $(\omega,t) \in C_{\mathrm{qv}}[0,1]\times[0,1]$.
  Since $p>P$ and $\varepsilon>0$ are arbitrarily small, we deduce that $D \le P$ and thus that $D = P$.
\end{proof}


\section{Duality in the multi-marginal case}\label{sec:duality}

In this section, we will show a general duality result for the multi-marginal Skorokhod embedding problem and moreover, for a slightly more general problem. Our main result will then follow by exactly the same steps and arguments as for the one marginal duality, that is reduction of the primal problem to optimal multi-marginal Skorokhod embedding (Step 1 in the last section) and domination of the dual problem via the dual in the optimal multi-marginal Skorokhod embedding (Step 3 in the last section).\medskip

To this end, we introduce the set of all randomized multi stopping times or $n$-tuples of randomized stopping times.  As before we consider the space $(\bar\Omega,\bar\F,\bar \W)$ and denote its elements by $(\omega,x)$. We consider all $n$-tuples $\tau=(\tau_1,\ldots,\tau_n)$ with $\tau_1\leq\ldots\leq\tau_n$ and $\tau_i\in \mathsf{RT}$ for all $i$. We identify two such tuples if
\begin{align}\label{EssEqual}
  d_f( \tau,  \tau') & := \left|\E_{\bar\W} [f(\tau_1,\ldots, \tau_n) - f(\tau'_1,\ldots, \tau'_n)]\right| \\  \nonumber
  & \,=  \left| \int [f(\omega, \tau_1(\omega,x),\ldots, \tau_n(\omega,x)) -  f(\omega, \tau'_1(\omega,x),\ldots, \tau'_n(\omega,x))] \ol\W(\mathrm{d}\omega, \mathrm{d}x)\right| 
\end{align} vanishes 
for all continuous,  bounded $f\colon C(\R_+)\times \R_+^n\to\R$ and denote the resulting space by $\mathsf{RT}_n$. Moreover, we consider $\mathsf{RT}_n$ as a topological space by testing against all continuous bounded functions as in \eqref{EssEqual}. As for the one-marginal case, for an ordered tuple $\tau_1\leq\ldots\leq \tau_n$ of stopping times it follows from \eqref{eq:STchar} that all elements of the respective equivalence class are ordered tuples of stopping times as well. We will denote this class by $\RST_n$.

%

Fix $I\subset \{1,\ldots,n\}$ with $n\in I$ and $|I|\leq n$ measures $(\mu_i)_{i\in I}=\mu$ in convex order with finite first moment. If $i \in \{1,\dots,n\}\setminus I$, write $i+$ for the smallest element of $\{j \in I: j \ge i\}$. For $i\in I$ we set $i+=i$. By an iterative application of the de~la~Vall\'ee-Poussin Theorem, there is an increasing family of smooth, non-negative, strictly convex functions $(\phi_i)_{i=1,\dots,n}$ (increasing in the sense that $\phi_i\leq \phi_j$ for $i\leq j$) such that $\phi_i(0)=0$ and $\phi_{i+1}/\phi_i \to \infty$ as $x \to \pm \infty$, and $\int \phi_i \dd \mu_{i+}<\infty$ for all $i=1,\dots,n$. Denote the corresponding compensating processes by $\zeta^i$ such that $Q^{i}:=\phi_i( B)-\zeta^i$ is a martingale.  We also write $\cE_i := \left\{\psi \in C(\R): \frac{|\psi|}{1 + \varphi_i} \text{ is bounded}\right\}$.

Then, we define $\RST_n(\mu)$  to be the subset of $\RST_n$ consisting of all tuples $(\tau_1\leq\ldots\leq\tau_n)$ such that $\bar B_{\tau_i}\sim \mu_i$ for all $i\in I$ and $\E_{\ol\W}[\zeta^n_{\tau_n}] < \infty$. Similar to the one-marginal case we get

\begin{lemma}
  For any $I\subset \{1,\ldots,n\}$ with $n\in I$ and any family of measures $(\mu_i)_{i\in I}=\mu$ in convex order the set $\RST_n(\mu)$  is compact.
\end{lemma}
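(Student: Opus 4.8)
The plan is to transfer the one-marginal compactness statement \cite[Theorem~3.14 and Section~7.2.1]{BeCoHu14} to the $n$-marginal setting essentially verbatim. We identify each ordered tuple $\tau=(\tau_1,\ldots,\tau_n)\in\RST_n$ with the law $\nu_\tau$ of $(\bar B,\tau_1,\ldots,\tau_n)$ on the Polish space $C(\R_+)\times\R_+^n$, so that $\RST_n(\mu)$ becomes a subset of the probability measures on that space. The set $\RST_n$ itself is \emph{closed} for weak convergence: having $C(\R_+)$-marginal $\W$, being concentrated on the closed set $\{t_1\le\cdots\le t_n\}$, and, for each $i$, having the $(\omega,t_i)$-pushforward satisfy the stopping-time characterisation \eqref{eq:STchar}, are all conditions stable under weak limits --- for the last one because, by Proposition~\ref{prop:very cont}, the test functionals $(\omega,t_1,\ldots,t_n)\mapsto f(t_i)\,(g-\E_\W[g|\F_t])(\omega)$ with $f$ supported in $[0,t]$ and $g\in C_b(C(\R_+))$ are bounded and continuous. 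It therefore remains to prove that $\{\nu_\tau:\tau\in\RST_n(\mu)\}$ is relatively compact and that the remaining constraints in the definition of $\RST_n(\mu)$ survive weak limits.

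\textbf{Relative compactness.} Exactly as in the one-marginal case, $\E_{\ol\W}[\zeta^n_{\tau_n}]<\infty$ is equivalent to $\E_{\ol\W}[\zeta^n_{\tau_n}]=\int\phi_n\dd\mu_n=:C_n$, so this bound is \emph{uniform} over $\RST_n(\mu)$. Fix $\eps>0$ and set $L:=2C_n/\eps$, so that $\ol\W(\zeta^n_{\tau_n}>L)\le\eps/2$ by Markov's inequality. Since under $\ol\W$ the canonical process is a Brownian motion and $\phi_n''>0$ on a set of positive Lebesgue measure, the nondecreasing process $\zeta^n_K=\tfrac12\int_0^K\phi_n''(\bar B_s)\dd s$ tends to $+\infty$ $\ol\W$-almost surely as $K\to\infty$; choose $K$ with $\ol\W(\zeta^n_K\le L)\le\eps/2$. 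Because $\zeta^n$ is nondecreasing, $\{\tau_n>K\}\cap\{\zeta^n_{\tau_n}\le L\}\subseteq\{\zeta^n_K\le L\}$, whence $\ol\W(\tau_n>K)\le\eps$ for every $\tau\in\RST_n(\mu)$. As $\tau_1\le\cdots\le\tau_n$ and the $C(\R_+)$-marginal is always $\W$, the family $\{\nu_\tau:\tau\in\RST_n(\mu)\}$ is tight, and Prokhorov's theorem gives relative compactness.

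\textbf{Stability under limits.} Let $\nu_{\tau^k}\to\nu$ weakly. By closedness of $\RST_n$, the limit $\nu$ is again the law of some $\tau\in\RST_n$. For $i\in I$ and $h\in C_b(\R)$ the map $(\omega,t_1,\ldots,t_n)\mapsto h(\omega(t_i))$ is bounded and continuous on $C(\R_+)\times\R_+^n$, so $\int h\dd\mu_i=\int h(\omega(t_i))\dd\nu_{\tau^k}\to\int h(\omega(t_i))\dd\nu$, which forces $\bar B_{\tau_i}\sim\mu_i$ under $\nu$ (both sides are probability measures agreeing on $C_b(\R)$). Finally $(\omega,t_1,\ldots,t_n)\mapsto\zeta^n_{t_n}(\omega)$ is nonnegative and continuous, so the Portmanteau theorem yields $\E_\nu[\zeta^n_{\tau_n}]\le\liminf_k\E_{\nu_{\tau^k}}[\zeta^n_{\tau^k_n}]=C_n<\infty$. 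Hence $\nu\in\RST_n(\mu)$, so $\RST_n(\mu)$ is sequentially compact, hence compact.

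\textbf{Expected difficulty.} There is no genuinely new obstacle here: the two substantive inputs --- that \eqref{eq:STchar} is preserved under weak convergence, so that limits of randomized stopping times are again randomized stopping times, and that the single scalar bound $\E_{\ol\W}[\zeta^n_{\tau_n}]<\infty$ both yields uniform tightness and (together with $\tau_i\le\tau_n$) keeps the embedding constraints $\bar B_{\tau_i}\sim\mu_i$ exact in the limit --- are inherited from the one-marginal theory of \cite{BeCoHu14}. The only new point compared with $n=1$ is bookkeeping: one works on $\R_+^n$, uses the closed cone $\{t_1\le\cdots\le t_n\}$ to keep the tuples ordered, and controls every time-coordinate through the last one $\tau_n$.
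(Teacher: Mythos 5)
Your proof is correct and follows precisely the route the paper intends: the paper states the lemma without proof, referring to a ``trivial modification'' of the one-marginal argument of \cite[Theorem~3.14]{BeCoHu14}, i.e.\ identifying tuples with measures on $C(\R_+)\times\R_+^n$, getting tightness from the uniform bound $\E_{\ol\W}[\zeta^n_{\tau_n}]=\int\phi_n\dd\mu_n$ via Prokhorov, and checking that the ordering, the stopping-time characterisation \eqref{eq:STchar}, the marginal constraints and the finiteness of $\E_{\ol\W}[\zeta^n_{\tau_n}]$ are stable under weak limits. The details you supply (closedness of the cone $\{t_1\le\cdots\le t_n\}$, continuity of the test functionals via Proposition~\ref{prop:very cont}, and Portmanteau for the lower semicontinuous functional $\zeta^n$) are exactly the bookkeeping the paper leaves implicit.
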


We introduce the space of paths where we have stopped $n$ times:
\begin{equation*}
  \S_n:=\big\{(f,s_1,\ldots,s_n)\,:\,(f,s_n)\in \S, 0\leq s_1\leq \ldots \leq s_n\big\},
\end{equation*}
equipped with the topology generated by the obvious analogue of
\eqref{STop}:
\begin{equation*}
  d_{\S_n}\left( (f,s_1, \ldots, s_n),(g,t_1,\dots,t_n)\right) = \max
  \left( |s_1-t_1|, \ldots, |s_n-t_n|, \sup_{u\ge 0}
    |f(u\wedge s_n)-g(u\wedge t_n)|\right).
\end{equation*}
We put $\Delta_n:=\{(s_1,\ldots,s_n)\in\R_+^n\colon s_1\leq\ldots\leq s_n\}$. As a natural extension of an optional process, we say that a process $Y\colon C(\R_+) \times \Delta_n$ is optional if for any family of stopping times $\tau_1 \le\dots \le \tau_n$, the map $Y(\bar B,\tau_1,\dots,\tau_n)$ is $\ol\F_{\tau_n}$-measurable. Put
\begin{equation*}
 r_n\colon C(\R_+)\times \Delta_n\to\S_n, \quad (\omega,s_1,\ldots,s_n)\mapsto (\omega_{\upharpoonright [0,s_n]},s_1,\ldots,s_n).
\end{equation*}
Just as in the one-marginal case a function $Y\colon C(\R_+)\times\Delta_n\to\R$ is optional if and only if there exist a Borel function $H\colon\S_n\to\R$ such that $Y=H\circ r_n$. \medskip


Given $\gamma\colon \S_n\to\R$, we are interested in the following $n$-step primal problem
\begin{align}\label{primal}
  P^*_n:=\sup\left\{ \E_{\ol\W}[\gamma\circ r_n(\omega,\tau_1,\ldots,\tau_n)]: (\tau_i)_{i=1}^n \in\RST_n(\mu)\right\}
\end{align}
and its relation to the dual problem
\begin{align}\label{dual}
  D^*_n:=\inf\left\{a: \begin{array}{l} \text{ there exist } (\psi_j)_{j\in I}, \text{ martingales }(M^i)_{i=1}^n, \E_{\W}[M^i_\infty]=0, \int \psi_j \dd\mu_j=0,\\
		     a+\sum_{j\in I} \psi_j(B_{t_j}(\omega)) + \sum_{i=1}^n M^i_{t_i}(\omega) \geq \gamma(\omega,t_1,\ldots,t_n)\\ \text{for all } \omega\in C(\R_+),(t_1,\ldots,t_n)\in\Delta_n
		   \end{array}
         \right\}.
\end{align}

\begin{remark}
  Note that in the primal as well as dual problem only the stopping times truly live on $\bar\Omega$. The martingales $M^i$ as well as the compensators $\zeta^i$ live on $C(\R_+)\times \R_+$ in that they satisfy e.g.\ $M^i_t(\omega,x)=M^i_t(\omega).$ We stress this by suppressing the $x$ variable and writing e.g.\ $\E_\W[M^i_\infty]=0$ rather than $\E_{\ol \W}[M^i_\infty]=0$.
\end{remark}

\begin{convention}\label{rem:IC} 
  In the formulation of $D^*_n$ in \eqref{dual} and in the rest of this section $M^1, \ldots, M^{n}$ will range over $\S$-continuous martingales such that $M^i_t(\omega)=\E_{\ol\W}[m^i|\F_t^0](\omega)+Q_t(\omega)$ for some $m^i \in C_b(\Omega)$ and $Q_t(\omega)=f(B_t(\omega))-\zeta^f_t(\omega)$ where $f$ is a smooth function such that $|f|/(1+\phi_i)$ is bounded, and $\zeta^f$ is the corresponding compensating process $\zeta^f = \frac{1}{2} \int_0^\cdot f''(B_s) \dd s$.
  In addition, we assume that  $\psi_i\in \cE_i$ for all $i \le n$.
\end{convention} 

\begin{theorem}\label{thm:duality}
  Let $\gamma\colon\S_n\to\R$ be upper semicontinuous and bounded from above. Under the above assumptions we have $P_n^*=D_n^*$.
\end{theorem}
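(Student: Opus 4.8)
The plan is to mimic exactly the three-step scheme used in the one-marginal case (Theorem~\ref{HedgingDual}), but now at the level of the multi-marginal Skorokhod embedding. Since the theorem as stated is purely about the abstract primal/dual values $P_n^*$ and $D_n^*$ over $\RST_n(\mu)$ and $\S$-continuous martingales, the ``Step~1'' reduction (primal $\leftrightarrow$ embedding) is not needed here — it is only Steps~2 and~3 that are relevant, and in fact for the statement $P_n^* = D_n^*$ as written, Step~2 (the Skorokhod duality) already gives one formulation and Step~3 is what upgrades it to the form in \eqref{dual}. Concretely, I would first invoke the multi-marginal optimal Skorokhod embedding duality from \cite{BeCoHu14} (the $n$-marginal analogue of Theorem~\ref{thm:SEPDual}), which yields $P_n^* = \tilde D_n^*$ where the dual optimizers are $\psi_j \in \cE^0_j$, nonnegative constants $\alpha_i$, and continuous bounded functions $m^i$ on $C(\R_+)$ with $\E_\W[m^i]=0$, satisfying the pointwise inequality $p + \sum_{j\in I}\psi_j(B_{t_j}) + \sum_{i=1}^n\big(\E_\W[m^i|\F_{t_i}] + \alpha_i Q^i(\omega,t_i)\big) \geq \gamma(\omega,t_1,\dots,t_n)$ on $C(\R_+)\times\Delta_n$. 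The inequality $P_n^* \leq D_n^*$ is the easy direction: for $\tau \in \RST_n(\mu)$ one evaluates the dual inequality at $(t_1,\dots,t_n) = (\tau_1,\dots,\tau_n)$, takes $\E_{\ol\W}$, uses that each $M^i_{\cdot\wedge\tau_i}$ is a (sufficiently integrable, by the $\zeta^n$-control and convex ordering) martingale so that $\E_{\ol\W}[M^i_{\tau_i}] = 0$, and uses $\int\psi_j\,d\mu_j = 0$; this needs only a Fatou/uniform-integrability argument analogous to the one given after Theorem~\ref{HedgingDual}, relying on $\phi_i \leq \phi_n$ and $\E_{\ol\W}[\zeta^n_{\tau_n}]<\infty$.

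For the hard direction $P_n^* \geq D_n^*$, i.e. $D_n^* \leq \tilde D_n^*$, I would transfer each dual certificate to the form required by Convention~\ref{rem:IC}. The term $\alpha_i Q^i(\omega,t_i) = \alpha_i\big(\phi_i(B_{t_i}) - \zeta^i_{t_i}\big)$ is already of the ``$f(B_t)-\zeta^f_t$'' shape allowed in Convention~\ref{rem:IC}, so one absorbs it into $M^i$. The genuine point is to show that the abstract conditional-expectation martingales $M^i_t(\omega) = \E_\W[m^i|\F_t^0](\omega)$ appearing here are exactly the $\S$-continuous martingales of the convention: this is precisely Proposition~\ref{prop:very cont} (for $m^i \in C_b(C(\R_+))$, $\E_\W[m^i|\F_t]$ is $\S$-continuous), together with the density of $\Omega^{\mathrm{qv}}$ in $C(\R_+)$ which allows one to pass freely between $C(\R_+)$ and $\Omega^{\mathrm{qv}}$ versions, just as in the passage from $D^*$ to $D^{*,\mathrm{qv}}$ in Section~\ref{sec:result}. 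Thus each summand in the dual certificate lands in the admissible class of Convention~\ref{rem:IC}, every $\psi_j$ lands in $\cE_j$, and the pointwise inequality on $C(\R_+)\times\Delta_n$ is exactly the constraint in \eqref{dual}. Taking infima gives $D_n^* \leq \tilde D_n^* = P_n^*$, and combined with the easy direction, $P_n^* = D_n^*$.

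The main obstacle — and the reason the statement is nontrivial rather than a verbatim copy of the one-marginal proof — is the multi-marginal Skorokhod embedding duality itself (the $n$-step analogue of Theorem~\ref{thm:SEPDual}), and in particular arranging the \emph{nested} integrability: one needs the increasing family $(\phi_i)$ with $\phi_{i+1}/\phi_i \to \infty$ and $\int\phi_i\,d\mu_{i+} < \infty$ so that the $i$-th compensator $\zeta^i$ is controlled by the terminal one $\zeta^n$, which is what makes $\RST_n(\mu)$ compact (the preceding lemma) and what makes the Fatou argument in the easy direction work for every intermediate martingale $M^i$ simultaneously. I expect that proving the $n$-marginal version of Theorem~\ref{thm:SEPDual} is done either by a direct Hahn--Banach/minimax argument on the compact set $\RST_n(\mu)$ against the convex cone generated by $\{\E_\W[m|\F_{t_i}]\}_i \cup \{\psi_j(B_{t_j})\}_j \cup \{Q^i\}_i$, or — more in the spirit of \cite{BeCoHu14} — by citing the multi-marginal results there directly; either way, the handling of the intermediate-marginal terms $i \notin I$ (which contribute a martingale $M^i$ but no $\psi$, via the bookkeeping device $i \mapsto i+$) is the place where care is genuinely needed, whereas everything else is a routine adaptation of Section~\ref{sec:result}.
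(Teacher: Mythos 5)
Your plan defers precisely the point that Theorem~\ref{thm:duality} is there to establish. The ``$n$-marginal analogue of Theorem~\ref{thm:SEPDual}'' that you propose to invoke is not available in \cite{BeCoHu14}: that reference provides only the one-marginal duality, and the multi-marginal duality \emph{is} the content of this theorem, so citing it is circular, and ``a direct Hahn--Banach/minimax argument on $\RST_n(\mu)$'' is a one-line gesture at what the paper spends the whole section proving. The paper's actual route is inductive: it first proves the auxiliary duality of Theorem~\ref{thm:auxDual} (first stopping time with prescribed law $\mu_1$, second constrained only through $\E_{\ol\W}[\zeta^2_{\tau_2}]\le V_2$) by repeated use of the minimax Theorem~\ref{minmax} to peel off constraints as Lagrange multipliers, reducing in the end to classical optimal transport duality; then, for $n=|I|=2$, it Lagranges out the stopping-time property of $\tau_2$ via the test class $H$ built from \eqref{eq:STchar} and the marginal constraint via $\psi_2$, applies Theorem~\ref{thm:auxDual}, and converts the resulting certificate into the form \eqref{dual} by taking the conditional expectation $\E_\W[\,\cdot\,|\F_{t_2}]$ of Definition~\ref{EAverage}, which annihilates the $h$-terms. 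None of this machinery appears in your proposal.

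The second gap is the case $i\notin I$, which you correctly flag as ``where care is genuinely needed'' but do not address. Your sketch in fact produces, for an unprescribed intermediate time, a dual function $\psi_1$ attached to a marginal that is not in the data, and removing it is not routine: the paper writes $P^*_2$ (with $I=\{2\}$) as a supremum over $\mu_1\preceq\mu_2$ of the two-marginal value, and then has to interchange the resulting $\sup_{\mu_1}$ with the infimum. This requires the coercivity perturbation $\psi_1^\eps=\psi_1-\eps\phi_2$, restriction of the competing measures to the compact set $\mathcal{P}_K$, penalization of the convex-order constraint by the class $\textsf{CV}$, a further application of Theorem~\ref{minmax}, and finally the absorption of $c$ and $\eps\phi_2$ into $\psi_2^\eps$ plus martingale terms using that the compensators $\zeta^2,\zeta^c$ are increasing. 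Without this argument the claimed dual $D^*_n$ (which carries $\psi_j$ only for $j\in I$) is not reached. The easy inequality $P_n^*\le D_n^*$ and the repackaging of the $\alpha_i Q^i$ terms into the martingales of Convention~\ref{rem:IC} via Proposition~\ref{prop:very cont} are fine, but they are the routine part; as it stands the proposal is an outline that assumes the hard steps rather than a proof.
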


As usual the inequality $P_n^*\leq D_n^*$ is not hard to see. The proof of the opposite inequality is based on the following minmax theorem. 

\begin{theorem}[{see e.g.\ \cite[Theorem~45.8]{St85}  or \cite[Theorem~2.4.1]{AdHe96}}]\label{minmax}
  Let $K, L$ be convex subsets of vector spaces $H_1$ respectively $H_2$, where $H_1$ is locally convex and let $F\colon K\times L\to\R$ be given. If
  \begin{enumerate}
    \item K is compact,
    \item $F(\cdot, y)$ is continuous and convex on $K$ for every $y\in L$,
    \item $F(x,\cdot)$ is concave on $L$ for every $x\in K$,
  \end{enumerate}
  then
  \begin{equation*}
    \sup_{y\in L}\inf_{x\in K} F(x,y)=\inf_{x\in K}\sup_{y\in L} F(x,y).
  \end{equation*}
\end{theorem}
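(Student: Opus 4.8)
One of the two inequalities is free and uses none of the hypotheses: for every $x_0\in K$ and $y_0\in L$ we have $\inf_{x\in K}F(x,y_0)\le F(x_0,y_0)\le\sup_{y\in L}F(x_0,y)$, and taking the supremum over $y_0$ on the left and the infimum over $x_0$ on the right gives $\sup_{y\in L}\inf_{x\in K}F\le\inf_{x\in K}\sup_{y\in L}F$. The content of the theorem is the reverse inequality, and I would prove it by a Sion-type argument resting on the compactness of $K$ and on finite-dimensional separation. Write $v:=\sup_{y\in L}\inf_{x\in K}F(x,y)$; we may assume $K,L\neq\emptyset$, otherwise both sides are equal with the usual conventions. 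Fix an arbitrary real $c>v$. The plan is to produce $x^\ast\in K$ with $F(x^\ast,y)\le c$ for all $y\in L$; this yields $\inf_{x}\sup_{y}F\le c$, and letting $c\downarrow v$ then forces $\inf_x\sup_y F\le v$, which together with the trivial inequality gives equality.

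To find such an $x^\ast$, for $y\in L$ set $A_y:=\{x\in K:F(x,y)\le c\}$. By hypothesis~(2) each $F(\cdot,y)$ is continuous, so $A_y$ is a closed subset of the compact space $K$. Hence it suffices to show that the family $(A_y)_{y\in L}$ has the finite intersection property: then, by compactness of $K$, $\bigcap_{y\in L}A_y\neq\emptyset$, and any point in this intersection is the desired $x^\ast$.

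\textbf{The finite case (the heart of the matter).} Fix $y_1,\dots,y_m\in L$ and suppose, towards a contradiction, that $\bigcap_{i=1}^m A_{y_i}=\emptyset$, i.e.\ $\inf_{x\in K}\max_{1\le i\le m}F(x,y_i)>c$. Consider the lower set
\begin{equation*}
  E:=\{u\in\R^m:\exists\,x\in K\text{ with }F(x,y_i)\le u_i\text{ for }i=1,\dots,m\}.
\end{equation*}
Using the convexity of $K$ together with hypothesis~(2) one checks that $E$ is convex; it is clearly upward closed; and it is closed, since if $u^{(k)}\to u$ with witnesses $x^{(k)}\in K$, then a subsequence of $x^{(k)}$ converges in $K$ to some $x^\ast$, and continuity of each $F(\cdot,y_i)$ gives $F(x^\ast,y_i)\le u_i$. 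The contradiction hypothesis says precisely $(c,\dots,c)\notin E$. Since $E$ is a nonempty closed convex subset of $\R^m$, it can be strictly separated from the point $(c,\dots,c)$: there are $a\in\R^m\setminus\{0\}$ and $\beta\in\R$ with $\langle a,u\rangle\ge\beta$ for all $u\in E$ and $\langle a,(c,\dots,c)\rangle<\beta$. Upward closedness of $E$ forces $a_i\ge0$ for all $i$ (otherwise sending $u_i\to+\infty$ contradicts the lower bound), so after normalising we may take $\sum_i a_i=1$. Then for every $x\in K$, since $(F(x,y_1),\dots,F(x,y_m))\in E$, we get $\sum_i a_iF(x,y_i)=\langle a,(F(x,y_1),\dots,F(x,y_m))\rangle\ge\beta>c$. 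Now invoke concavity in the second variable (hypothesis~(3)) and convexity of $L$: the point $\bar y:=\sum_i a_iy_i$ lies in $L$ and $F(x,\bar y)\ge\sum_i a_iF(x,y_i)>c$ for all $x\in K$, whence $\inf_{x\in K}F(x,\bar y)\ge\beta>c>v=\sup_{y\in L}\inf_{x\in K}F(x,y)$, a contradiction. This establishes the finite intersection property, and with it the theorem.

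\textbf{Remarks on the obstacle.} The step I expect to be the real heart of the argument is exactly the finite case just treated: it is the only place where all the hypotheses are used together, and they must be used in concert --- convexity of $E\subset\R^m$ comes from convexity of $F(\cdot,y_i)$ on $K$ and convexity of $K$; closedness of $E$ comes from compactness of $K$ and continuity of $F(\cdot,y_i)$; and the final contradiction consumes concavity of $F(x,\cdot)$ and convexity of $L$. I note that local convexity of $H_1$ does not actually enter this route, since all separation is carried out in the finite-dimensional space $\R^m$ rather than in $H_1$; it would be needed only in the alternative argument that separates the disjoint compact convex sets $\{x\in K:F(x,y_i)\le c\}$ directly inside $H_1$. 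Everything else --- the trivial inequality, the reduction to the finite intersection property, and the passage $c\downarrow v$ --- is routine bookkeeping.
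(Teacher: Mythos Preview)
The paper does not give its own proof of this statement: Theorem~\ref{minmax} is quoted as a known minimax result with references to \cite[Theorem~45.8]{St85} and \cite[Theorem~2.4.1]{AdHe96}, and is then used as a black box in the proofs of Theorems~\ref{thm:auxDual} and~\ref{thm:duality}. There is therefore nothing in the paper to compare your argument against.

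That said, your proof is a correct and self-contained Kneser--Sion type argument. The reduction to the finite intersection property via compactness of $K$ is standard, and your treatment of the finite case --- passing to the convex, closed, upward-closed set $E\subset\R^m$, separating $(c,\dots,c)$ from it, and using the resulting probability vector $a$ together with concavity in $y$ to manufacture $\bar y=\sum_i a_i y_i\in L$ with $\inf_{x}F(x,\bar y)>c$ --- is clean and complete. Your observation that local convexity of $H_1$ is not actually used along this route is also correct: all separation happens in $\R^m$, and the hypotheses on $H_1$ enter only through compactness of $K$ and continuity of each $F(\cdot,y)$.
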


The inequality $P_n^* \ge D_n*$ will be proved inductively on $n$. To this end, we need the following preliminary result.

\begin{theorem}\label{thm:auxDual}
  Let $c\colon C(\R_+)\times \Delta_2\to\R$ be upper semicontinuous and bounded from above and let $V_i=\int\phi_i \dd\mu_i<\infty$ for $i=1,2$. Put 
  \begin{equation*}
    P^{V_2}:=\sup\big\{ \E_{\ol\W}[c(\omega,\tau_1,\tau_2)]\,:\, \tau_1\in \RST_{1}(\mu_1), \E_{\ol\W}[\zeta^2_{\tau_2}]\leq V_2,(\tau_1,\tau_2)\in\mathsf{RT}_2\big\}
  \end{equation*}
  and
  \begin{equation*}
    D^{V_2}:=\inf\left\{\int\psi_1 \dd\mu_1 : 
    \begin{array}{l}
       m\in C_b(C(\R_+)), \psi_1 \in C_b(\R_+) , \E_\W[m]=0, \exists \alpha_1,\alpha_2\geq 0\\
       m(\omega)+\psi_1(\omega(t_1)) - \sum_{i=1}^2\alpha_i(V_i-\zeta^i_{t_i}(\omega))\geq c(\omega,t_1,t_2)
    \end{array}\right\}.
  \end{equation*}
  Then, we have
  \begin{equation*}
    P^{V_2}=D^{V_2}.
  \end{equation*}
\end{theorem}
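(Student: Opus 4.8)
The plan is to establish $P^{V_2} = D^{V_2}$ via the minmax theorem (Theorem~\ref{minmax}), mimicking the structure that leads to Theorem~\ref{thm:SEPDual} but with the marginal constraint on $\tau_2$ replaced by the soft inequality constraint $\E_{\ol\W}[\zeta^2_{\tau_2}] \le V_2$. Concretely, I would take $K$ to be the set of randomized $2$-stopping times $(\tau_1,\tau_2) \in \RST_2$ such that $\bar B_{\tau_1} \sim \mu_1$ and $\E_{\ol\W}[\zeta^2_{\tau_2}] \le V_2$, viewed inside the space of measures on $C(\R_+) \times \R_+^2$ (so $K$ is convex), and $L$ to be the set of triples $(m, \psi_1, (\alpha_1,\alpha_2))$ with $m \in C_b(C(\R_+))$, $\E_\W[m] = 0$, $\psi_1 \in C_b(\R_+)$, $\alpha_1,\alpha_2 \ge 0$. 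The payoff functional would be
\begin{equation*}
  F\big((\tau_1,\tau_2),(m,\psi_1,\alpha_1,\alpha_2)\big) := \E_{\ol\W}\Big[ c(\omega,\tau_1,\tau_2) - m(\omega) - \psi_1(B_{\tau_1}(\omega)) + \textstyle\sum_{i=1}^2 \alpha_i\big(V_i - \zeta^i_{\tau_i}(\omega)\big)\Big],
\end{equation*}
which is affine (hence convex and concave) in each variable separately, so conditions (2)–(3) of Theorem~\ref{minmax} reduce to a continuity statement for $F(\cdot,y)$ on $K$.

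The verification that $\inf_x \sup_y F(x,y) = D^{V_2}$ and $\sup_y \inf_x F(x,y) = P^{V_2}$ is the bookkeeping part: if $(\tau_1,\tau_2) \notin K$, one of the constraints is violated and an appropriate choice of $y$ (an unbounded $m$-component via approximation by bounded functions, an $\psi_1$ separating the wrong marginal, or sending $\alpha_i \to \infty$ when $\E_{\ol\W}[\zeta^i_{\tau_i}] > V_i$) drives $\sup_y F = +\infty$; conversely on $K$ all the extra terms integrate to something $\le 0$ in expectation because $\E_{\ol\W}[m] = 0$, $\E_{\ol\W}[\psi_1(B_{\tau_1})] = \int \psi_1 \dd\mu_1$ cancels against the objective's normalization once one rewrites $\int \psi_1\dd\mu_1$ appropriately, and $V_i - \E_{\ol\W}[\zeta^i_{\tau_i}] \ge 0$. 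For the other order, $\inf_x F(x,y)$ over $x \in K$ picks out exactly the dual inequality: a pointwise domination $m(\omega) + \psi_1(\omega(t_1)) - \sum_i \alpha_i(V_i - \zeta^i_{t_i}(\omega)) \ge c(\omega,t_1,t_2)$ on $C(\R_+) \times \Delta_2$ forces $F \le 0$ for every admissible $x$; the delicate point is the reverse, i.e.\ that if no such pointwise inequality holds then some $x \in K$ makes $\inf_x F < 0$ — this needs the stopping times in $K$ to be rich enough to test the inequality at (essentially) arbitrary stopped paths, which is where the structure of $\RST_2$ and the characterization \eqref{eq:STchar} enter, exactly as in \cite{BeCoHu14}.

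The main obstacle is the compactness of $K$ together with the continuity of $F(\cdot, y)$ on $K$. Compactness of $\RST_1(\mu_1)$ is quoted, but here $\tau_2$ is only constrained by $\E_{\ol\W}[\zeta^2_{\tau_2}] \le V_2$ rather than by a fixed marginal, so I would need a tightness argument: the uniform bound on $\E_{\ol\W}[\zeta^2_{\tau_2}]$ controls $\E_{\ol\W}[\phi_2(B_{\tau_2}) \wedge N]$ along a suitable localization and, via the growth of $\phi_2$, prevents mass from escaping to infinity in the $B_{\tau_2}$ marginal, while $\tau_1 \le \tau_2$ and the Brownian scaling give the needed control in the time variable — essentially the argument of \cite[Theorem~3.14, Section~7.2.1]{BeCoHu14} adapted to the inequality constraint. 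One must also check that the constraint set is closed under weak limits, i.e.\ that $\E_{\ol\W}[\zeta^2_{\tau_2}] \le V_2$ is preserved in the limit; since $\zeta^2$ is not bounded this requires a lower-semicontinuity/Fatou argument on $\zeta^2_{\tau_2}$ along the dyadic approximation, analogous to how $\E_{\ol\W}[\zeta^1_\tau] < \infty$ was handled in the one-marginal case. Once compactness and the (automatic, by continuity of $c$ and boundedness plus the $\zeta^i_{\tau_i}$ being weakly continuous on $K$ by the Fatou/tightness argument) continuity of $F(\cdot,y)$ are in place, the minmax theorem closes the gap and the theorem follows; the remaining inequality $P^{V_2} \le D^{V_2}$ is, as remarked, the elementary direction obtained by testing the dual inequality against any admissible $(\tau_1,\tau_2)$ and taking expectations.
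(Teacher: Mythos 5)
There are genuine gaps, and they sit exactly where the theorem's content lies. First, you have misread the primal: in $P^{V_2}$ only $\tau_1$ is required to be a randomized \emph{stopping} time, while the pair is merely asked to lie in $\RT_2$, so $\tau_2$ is an arbitrary randomized time. Your $K$ uses $\RST_2$, i.e.\ both times adapted. This is not cosmetic: the fact that $\tau_2$ carries no adaptedness constraint is precisely why the dual $D^{V_2}$ can contain a plain $m(\omega)$ and $\psi_1(\omega(t_1))$ with no terms enforcing the stopping property in $t_2$ (such terms, the functions $h\in H$ built from \eqref{eq:STchar}, only enter later, in the proof of Theorem~\ref{thm:duality}); with your primal the asserted identity with $D^{V_2}$ is a different and stronger claim. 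Second, your single minmax is structurally inconsistent and, even repaired, cannot finish the job: you put all constraints into $K$ \emph{and} dualize them in $F$ (so for $x\in K$ the inner infimum over $\psi_1$ alone already gives $-\infty$ with your $F$, since the compensating $+\int\psi_1\dd\mu_1$ is missing), and, more importantly, Theorem~\ref{minmax} only exchanges a sup and an inf of expectation functionals. Passing from $\inf_{(m,\psi_1,\alpha)}\sup_{x}\E_{\ol\W}[\cdots]$ to the existence of $m,\psi_1,\alpha$ satisfying the \emph{pointwise} inequality on all of $C(\R_+)\times\Delta_2$ is the heart of the proof, and your appeal to the stopping times being ``rich enough to test the inequality at arbitrary stopped paths'' is exactly what fails: $c$ is not optional, and adapted (randomized) stopping times cannot concentrate on arbitrary $(\omega,t_1,t_2)$.

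The paper's proof does this work by a chain of reductions rather than one swap. It first reduces to $c$ continuous, bounded and with $\supp(c)\subset C(\R_+)\times[0,L]^2$; it then dualizes only the budget constraint $\E_{\ol\W}[\zeta^2_{\tau_2}]\le V_2$ with a multiplier $\alpha$ while replacing it in $K$ by the cap $\tau_2\le\max\{L,\tau_1\}$ --- it is this cap, together with compactness of $\RST_1(\mu_1)$, that gives the compactness needed for Theorem~\ref{minmax}, not a Fatou argument for the set $\{\E_{\ol\W}[\zeta^2_{\tau_2}]\le V_2\}$, so the tightness issue you flag never arises. After the swap, the inner supremum over the \emph{free} random time $\tau_2$ is computed pointwise, producing $c_\alpha(\omega,t_1)=\sup_{t_1\le t_2\le\max\{L,t_1\}}[c(\omega,t_1,t_2)+\alpha(V_2-\zeta^2_{t_2}(\omega))]$ (legitimate only because $\tau_2$ is not a stopping time). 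The remaining problem over $\tau_1\in\RST(\mu_1)$ is then recast as a genuine two-marginal transport problem between $\W$ and $\mu_1$ via the graph cost $F(\omega,t,y)$ ($=-\infty$ unless $\omega(t)=y$), the enlarged class $\TRST(\mu_1)$, and a second multiplier $\beta$ for the compensator constraint; finally the \emph{classical} Kantorovich duality of \cite{Vi09} is invoked to manufacture $m\in C_b(C(\R_+))$ and $\psi_1\in C_b(\R)$ with the pointwise inequality, which unwinds to the inequality defining $D^{V_2}$. These reduction steps --- compact time support, maximizing out the non-adapted $\tau_2$, the $\TRST$/transport reformulation, and the appeal to classical optimal transport duality --- are absent from your plan, and without them the minmax theorem alone does not prove $P^{V_2}\ge D^{V_2}$.
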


\begin{proof}
  The inequality $P^{V_2}\leq D^{V_2}$ follows easily. We are left to show the other inequality. The idea of the proof is to use a variational approach  together with Theorem~\ref{minmax} to reduce the claim to the classical duality result in optimal transport. 

  Using standard approximation procedures (see\ \cite[Proof of Theorem~5.10~(i), step~5]{Vi09}), we can assume that $c$ is continuous and bounded,  bounded from above by 0 and satisfies for some $L$
  $$\mbox{supp}(c)\subset C(\R_+)\times [0,L]^2.$$
  In the following, we want to apply Theorem~\ref{minmax} where we take for $K$ certain subsets of $\RT_2$. The convexity of these subsets is easily seen by interpreting elements of these sets as measures via the obvious extension of \eqref{eq:taumeasure}. Compactness follows by Prokhorov's Theorem: this is shown by a trivial modification of the argument in \cite[Theorem~3.14]{BeCoHu14}).

  Hence, it follows using Theorem~\ref{minmax} that
  \begin{align*}
    \sup_{ \substack{\tau_1\in \RST_{1}(\mu_1)\\ \E_{\ol\W}[\zeta_{\tau_2}^2]\leq V_2\\ (\tau_1,\tau_2)\in\mathsf{RT}_2}} \E_{\ol\W}[c(\omega,\tau_1,\tau_2)]
    & = \sup_{ \substack{\tau_1\in \RST_{1}(\mu_1)\\ \tau_2\leq \max\{L,\tau_1\}\\ (\tau_1,\tau_2)\in\mathsf{RT}_2}} \ \inf_{\alpha\geq 0}\ \E_{\ol\W}[c(\omega,\tau_1,\tau_2) + \alpha(V_2-\zeta_{\tau_2}^2(\omega))]\\
    & = \inf_{\alpha \geq 0}\ \sup_{  \substack{\tau_1\in \RST_{1}(\mu_1)\\ \tau_2\leq \max\{L,\tau_1\}\\(\tau_1,\tau_2)\in\mathsf{RT}_2}} \ \E_{\ol\W}[c(\omega,\tau_1,\tau_2) + \alpha(V_2-\zeta_{\tau_2}^2(\omega))]\\
    & =  \inf_{\alpha \geq 0} \ \sup_{\tau_1\in\RST_1(\mu_1)} \ \E_{\ol\W}[ c_{\alpha}(\omega,\tau_1)],
  \end{align*}
  where 
  \begin{equation*} 
     c_{\alpha}(\omega,t_1)=\sup_{t_1\leq t_2\leq \max\{L,t_1\}}\ c(\omega,t_1,t_2) + \alpha(V_2-\zeta_{t_2}^2(\omega)).
  \end{equation*}

  Hence, $c_{\alpha}$ is a continuous and bounded function on $C(\R_+)\times\R_+$ since $c$ is bounded, $\zeta^2$ is continuous and increasing, and $\{t_2:t_1\leq t_2\leq \max\{L,t_1\}\}$ is closed. To move closer to a classical transport setup we define $F\colon C(\R_+)\times\R_+\times\R\to [-\infty,0]$ by 
  $$ F(\omega,t,y):=\begin{cases}
                   c_\alpha(\omega,t) & \text{ if } \omega(t)=y \\
		  -\infty & \text{ else }
                  \end{cases},$$
  which is an upper semicontinuous and bounded function supported on
  $C(\R_+)\times [0,L]\times\R$. Moreover, we define
  $\TRST(\mu_1)$ to consist of all pairs of random variables  $(\tau, Y)$ on $(\ol\Omega,\ol\W)$ such that $Y\sim \mu_1$ and $\tau\in\RST$ satisfies $\E_{\ol\W}[\zeta^1_{\tau}]<\infty.$ If $\tau_1\in\RST(\mu_1)$, then $(\tau_1, \bar B_{\tau_1})\in\TRST(\mu_1)$ and 
  $$ \E_{\ol\W}[c_\alpha(\omega,\tau_1)] = \E_{\ol\W}[F(\omega,\tau_1,\bar B_{\tau_1})]>-\infty.$$
  Conversely, if $(\tau,Y)\in\TRST(\mu_1)$ with $\E_{\ol\W}[F(\omega,\tau,Y)]>-\infty$ almost surely $Y=B_\tau\sim \mu_1$ so that $\tau\in\RST(\mu_1).$ Therefore, by the same argument as above,
  \begin{align*}
    \sup_{\tau_1\in\RST(\mu_1)} \E_{\ol\W}[c_\alpha(\omega,\tau_1)] & = \sup_{(\tau,Y)\in \TRST(\mu_1)} \E_{\ol\W}[F(\omega,\tau,Y)]\\
    & = \inf_{\beta\geq 0}\sup_{Y\sim \mu_1} \E_{\ol\W}[F_\beta(\omega,Y)],
  \end{align*}
  where $F_\beta(\omega,y)=\sup_{0\leq t\leq L} F(\omega,t,y) + \beta(V_1-\zeta^1_{t_1})$ is upper semicontinuous and bounded from above. The last supremum is the primal problem of a classical optimal transport problem written in a probabilistic fashion. Hence, employing the classical duality result, e.g.\ \cite[Section~5]{Vi09}, we obtain
  \begin{align*}
    &  \sup_{\tau_1\in\RST(\mu_1)} \E_{\ol\W}[c_\alpha(\omega,\tau_1)]\\
    &\quad =\inf_{\beta\geq 0}\inf\left\{\int m\dd\W+\int \psi\dd\mu_1 : m\in C_b(\R_+),\psi\in C_b(\R),m(\omega)+\psi(y) \geq F_\beta(\omega,y)\right\}\\
    &\quad \geq  \inf\left\{\int m\dd\W+\int \psi\dd\mu_1 : \begin{array}{l}
                                                 \exists \beta\geq 0, m\in C_b(C(\R_+)), \psi\in C_b(\R) \text{ s.t. }\\
						  m(\omega)+\psi(y)-\beta(V_1-\zeta^1_{t}(\omega)) \geq F(\omega,t,y)
                                                \end{array}
     \right\}\\ 
    &\quad= \inf\left\{\int m\dd\W+\int \psi\dd\mu_1 : \begin{array}{l}
                                                 \exists \beta\geq 0, m\in C_b(C(\R_+)), \psi\in C_b(\R) \text{ s.t. }\\
						  m(\omega)+\psi(\omega(t))-\beta(V_1-\zeta^1_{t}(\omega)) \geq c_\alpha(\omega,t)
                                                \end{array}
    \right\}.
  \end{align*}
  Putting everything together yields the result.
\end{proof}

\begin{proof}[Proof of Theorem \ref{thm:duality}]
  By \cite[Proof of Theorem~5.10~(i), step~5]{Vi09} we can assume that $\gamma$ is continuous and bounded. We will show the result inductively by including more and more constraints (respectively \ Lagrange multipliers) in the duality result Theorem~\ref{thm:SEPDual}. In fact, we will only show the result for the two cases $n=2, I=\{2\}$ and $n=|I|=2$. The general claim follows then by an iterative application of the arguments that lead to Theorem~\ref{thm:auxDual} and the arguments below. We first consider the case where $n=|I|=2$.

  
  Recall from \eqref{eq:STchar} that a random time $\tau$ is a stopping time if and only if $ \E_{\ol\W}[f(\tau)(g-\E_{\W}[g|\F_t])]=0$ for all $g\in C_b(C(\R_+)))$ and $f\in C(\R_+)$ supported on $[0,t]$. We write $H$ for the set of all functions $h\colon C(\R_+)\times \R_+\to \R$ such that $h(\omega,s)=\sum_{i=1}^n f_i(s)(g_i-\E_{\W}[g_i|\F_{u_i}])(\omega)$ for $n\in\N$, $g_i\in C_b(C(\R_+)),$ and $f_i\in C_b(\R_+)$ supported on $[0,u_i]$. Then applying Theorem~\ref{minmax} again we have

  \begin{align*}
    &\sup_{(\tau_1,\tau_2)\in \RST_2(\mu_1,\mu_2)}  \E_{\ol\W}[\gamma\circ r_2(\omega,\tau_1,\tau_2)]\\
    & \qquad = \sup_{\substack{\tau_1\in\RST(\mu_1)\\ (\tau_1,\tau_2)\in\RT_2\\ \E_{\ol\W}[\zeta_{\tau_2}^2]\leq V_2}}\ \inf_{\substack{\psi_2\in C_b(\R)\\ h\in H}} \ \E_{\ol\W}\left[\gamma\circ r_2(\omega,\tau_1,\tau_2) +h(\omega,\tau_2) - \psi_2(\omega(\tau_2)) + \int \psi_2 \dd \mu_2 \right]\\
    & \qquad =\inf_{\substack{\psi_2\in C_b(\R)\\ h\in H}}\ \sup_{\substack{\tau_1\in\RST(\mu_1)\\(\tau_1,\tau_2)\in\RT_2 \\ \E_{\ol\W}[\zeta_{\tau_2}^2]\leq V_2}} \ \E_{\ol\W}\left[\gamma_{\psi_2,h}(\omega,\tau_1,\tau_2) \right],
  \end{align*}
  where we set
  \begin{equation*}
    \gamma_{\psi_2,h}(\omega,t_1,t_2):=\gamma\circ r_2(\omega,t_1,t_2) +
    h(\omega,t_2) - \psi_2(\omega(t_2)) + \int \psi_2 \dd \mu_2 \in
    C_b\left(C(\R_+) \times \Delta_2\right).
  \end{equation*}
  Applying Theorem~\ref{thm:auxDual}, we get
  \begin{align*}
    & \sup_{(\tau_1,\tau_2)\in \RST_2(\mu_1,\mu_2)} \E_{\ol\W}[\gamma\circ r_2(\omega,\tau_1,\tau_2)]\\
    &=\inf_{\substack{\psi_2\in C_b(\R) \\ h\in H}} \ \inf\left\{ \int \psi_1 \dd \mu_1: 
       \begin{array}{l}
         \psi_1\in C_b(\R) \text{ s.t. } \\ \exists m\in C_b(C(\R_+)), \E_\W[m]=0,   \alpha_1,\alpha_2\geq 0
         \text{ s. t. }\\ m(\omega) + \psi_1(\omega(t_1)) - \sum_{i=1}^2\alpha_i(V_i-\zeta^i_{t_i}(\omega)) \geq \gamma_{\psi_2,h}(\omega,t_1,t_2)
       \end{array}\right\}.
  \end{align*}
  Take $m,\psi_1,\alpha_1,\alpha_2$ satisfying 
  \begin{equation}\label{eq:DC1}
    m(\omega) + \psi_1(\omega(t_1)) - \sum_{i=1}^2\alpha_i(V_i-\zeta^i_{t_i}(\omega)) \geq \gamma_{\psi_2,h}(\omega,t_1,t_2).
  \end{equation}
  Observe that $\E_\W[f(t)(g-\E_\W[g|\F_u])|\F_t]=0$ whenever $\supp(f)\subset [0,u].$ Fixing $t_1$ and $t_2$ inequality~\eqref{eq:DC1} can be seen as an inequality between functions of $\omega$. Hence, taking conditional expectations with respect to $\F_{t_2}$ in the sense of Definition~\ref{EAverage} and using the optionality of $\gamma$ yields
  \begin{equation*}
    \E_\W[m|\F_{t_2}](\omega) + \sum_{i=1}^2\psi_i(\omega(t_i)) - \int\psi_2\dd\mu_2 - \sum_{i=1}^2\alpha_i(V_i-\zeta^i_{t_i}(\omega)) \geq \gamma\circ r_2(\omega,t_1,t_2).
  \end{equation*}
  Hence,
  \begin{align*}
    & \sup_{(\tau_1,\tau_2)\in \RST_2(\mu_1,\mu_2)} \E_{\ol\W}[\gamma\circ r_2(\omega,\tau_1,\tau_2)]\\
    &\geq \inf_{\psi_2\in C_b(\R)} \ \inf\left\{ \int \psi_1 \dd \mu_1+\int\psi_2 \dd \mu_2: 
       \begin{array}{l}
         \text{there exist a $\S$-continuous martingale }M,\\
         M_0=0, \psi_1\in C_b(\R_+)\text{ and } \alpha_1,\alpha_2\geq 0 \text{ s.t. }\\
         \sum_{i=1}^2(\psi_i(\omega(t_i))+M_{t_2}(\omega))\\ - \sum_{i=1}^2\alpha_i(V_i-\phi_i(\omega(t_i)) + \phi_i(\omega(t_i))-\zeta_{t_i}^i(\omega)) \\ \quad \quad \geq \gamma\circ r_2(\omega,t_1,t_2)
       \end{array}\right\}\\
    &= \inf_{\psi_1,\psi_2\in \mathcal{E}_1\times\mathcal E_2}\left\{\int \psi_1 \dd \mu_1+\int\psi_2 \dd \mu_2: 
       \begin{array}{l}
         \text{there exist two $\S$-continuous martingales }M^i,\\ M^i_0 =0,
         \text{ s.t. } \\\sum_{i=1}^2(\psi_i(\omega(t_i))+M^i_{t_i}(\omega))  \geq \gamma\circ r_2(\omega,t_1,t_2)
       \end{array}\right\} \\ 
    &= \ D^*_2,
  \end{align*}
  where in the final step we used the fact that $\E_{\ol\W}[\phi_i(B_{\tau_i})]=\E_{\ol\W}[\zeta_{\tau_i}^i]$, $\int \phi_i \, \dd \mu_i =V_i$, $\phi_i(B_0)=0$, and that $\phi_i(B)-\zeta^i$ is a martingale. 

  For later use, we write:
  \begin{equation*}
    D(\gamma):= \left\{(\psi_1,\psi_2) \in \cE_1\times \cE_2: 
       \begin{array}{l}
         \text{there exist two $\S$-continuous martingales } M^i,\\
         M^i_0 =0, \text{ s.t.}\\\sum_{i=1}^2(\psi_i(\omega(t_i))+M^i_{t_i}(\omega))  \geq \gamma\circ r_2(\omega,t_1,t_2)
       \end{array}\right\}.
  \end{equation*}

  We now consider the case where $n=2, |I|=1$ and $I = \{2\}$, so we are prescribing $\mu_2$ but not $\mu_1$. Writing $\rho \preceq \nu$ to denote that $\rho$ precedes $\nu$ in convex order, we use the result of the case where $|I|=2$ to see that:
  \begin{align*}
    P^*_2 = \sup_{(\tau_1,\tau_2)\in \RST_2(\mu_2)}  \E_{\ol\W}[\gamma\circ r_2(\omega,\tau_1,\tau_2)]
    & = \sup_{\mu_1 \preceq \mu_2} \sup_{(\tau_1,\tau_2)\in \RST_2(\mu_1,\mu_2)}  \E_{\ol\W}[\gamma\circ r_2(\omega,\tau_1,\tau_2)]\\
    & = \sup_{\mu_1 \preceq  \mu_2} \inf_{(\psi_1,\psi_2) \in D(\gamma)}\left\{\int \psi_1
      \dd\mu_1+\int\psi_2 \dd\mu_2\right\}.
  \end{align*}
  We now need to introduce some additional compactness. Recall from the definitions of $\phi_i$ that $\phi_2/\phi_1 \to \infty$ as $x \to \pm \infty$. Now let $\eps>0$ and write
  \begin{equation*}
    D^\eps(\gamma^\eps):= \left\{(\psi_1^\eps,\psi_2) : 
       \begin{array}{l}
         \psi_1^\eps + \eps \phi_2 \in \cE_1, \psi_2 \in \mathcal{E}_2, \text{ and there exist two
         $\S$-continuous} \\ \text{martingales } M^i, M^i_0=0
         \text{ such that:}\\\psi_1^\eps(\omega(t_1)) + \psi_2(\omega(t_2)) +\sum_{i=1}^2
         M^i_{t_i}(\omega))  \geq \gamma^\eps\circ r_2(\omega,t_1,t_2)
       \end{array}\right\}.
  \end{equation*}
  In particular, we have $(\psi_1,\psi_2) \in D(\gamma) \iff (\psi_1-\eps \phi_2,\psi_2) \in D^\eps(\gamma-\eps \phi_2(\omega(t_1)))$ and so (with $\psi_1^\eps = \psi_1-\eps \phi_2, \gamma^\eps = \gamma-\eps\phi_2(\omega(t_1))$)
  \begin{align*}
    \inf_{(\psi_1,\psi_2) \in D(\gamma)}\left\{\int \psi_1
    \dd\mu_1+\int\psi_2 \dd\mu_2\right\} & = \inf_{(\psi_1^\eps,\psi_2) \in
       D^\eps(\gamma^\eps)}\left\{\int (\psi_1^\eps+\eps \phi_2)
      \dd\mu_1+\int\psi_2 \dd\mu_2\right\}\\
     =& \inf_{(\psi_1^\eps,\psi_2) \in
      D^\eps(\gamma^\eps)}\left\{\int \psi_1^\eps
       \dd\mu_1+\int\psi_2 \dd\mu_2\right\} + \eps  \int \phi_2\, \mu_1(\d x).
  \end{align*}
  In particular, the final integral can be bounded over the set of $\mu_1 \preceq \mu_2$, and so by taking $\eps>0$ small, this term can be made arbitrarily small. Moreover, by neglecting it we get a quantity that is smaller than $P$.

  If we introduce the set 
  \begin{equation*}
    \textsf{CV}: = \big\{c:\R\to\R: c \text{ convex, }  c(x) \ge 0, c \text{ smooth}, c(x) \le L(1+|x|) , \text{ some } L \ge 0\big\},
  \end{equation*}
  then we may test the convex ordering property by penalising against $\textsf{CV}$. In particular, we can write after another application of Theorem \ref{minmax}
  \begin{align*}
    P^*_2\ge \inf_{(\psi_1^\eps,\psi_2) \in D^\eps(\gamma^\eps)} & \ \sup_{\mu_1 \preceq  \mu_2} \left\{\int \psi_1^\eps
        \dd\mu_1+\int\psi_2 \dd\mu_2\right\}\\
      & = \inf_{(\psi_1^\eps,\psi_2) \in D^\eps(\gamma^\eps)} \sup_{\mu_1} \inf_{\substack{c \in \textsf{CV}}} \left\{\int (\psi_1^\eps-c)
    \dd\mu_1+\int(\psi_2+c) \dd\mu_2\right\}.
  \end{align*}
  In addition, for fixed $\psi^\eps_1 \in D^\eps(\gamma^\eps)$, we observe that, by the fact that $\psi^\eps_1+\eps\phi_2 \in \mathcal{E}_1$, we must have $\psi^\eps_1(x) \to -\infty$ as $x \to \pm \infty$. Hence, we can find a constant $K$, which may depend on $\psi_1^\eps$, so that $\psi^\eps_1(x) <\psi_1^\eps(0)$ for all $x \not\in [-K,K]$. In particular, we may restrict the supremum over measures $\mu_1$ above to the set of probability measures $\mathcal{P}_K:= \{\mu: \mu([-K,K]^c) = 0\}$, where $A^c$ denotes the complement of the set $A$. Note that this set is compact, so we can then apply Theorem~\ref{minmax} to get:
  \begin{align*}
    \inf_{(\psi_1^\eps,\psi_2) \in D^\eps(\gamma^\eps)} 
    & \ \sup_{\mu_1 \preceq  \mu_2} \left\{\int \psi_1^\eps \dd\mu_1+\int\psi_2 \dd\mu_2\right\}\\
    & = \inf_{(\psi_1^\eps,\psi_2) \in D^\eps(\gamma^\eps)} \inf_{\substack{c \in
      \textsf{CV}}} \ \sup_{\mu_1\in \mathcal{P}_K}\  \left\{\int (\psi_1^\eps-c)
    \dd\mu_1+\int(\psi_2+c) \dd\mu_2\right\}\\
    & = \inf_{(\psi_1^\eps,\psi_2) \in D^\eps(\gamma^\eps)} \ \inf_{\substack{c \in
      \textsf{CV}}} \ \left\{ \sup_{x \in [-K,K]} \left[\psi_1^\eps(x)-c(x)\right]
    +\int(\psi_2+c) \dd\mu_2\right\}.
  \end{align*}
  In particular, for any $\delta>0$, we can find $(\psi_1^\eps,\psi_2) \in D^\eps(\gamma^\eps)$ and $c \in \textsf{CV}$ such that
  \begin{equation*}
    P^*_2 \ge \sup_{x \in \R} \left[\psi_1^\eps(x)-c(x)\right]
    +\int (\psi_2+c) \dd \mu_2 - \delta.
  \end{equation*}
  Take $\psi_2^\eps(\omega(t_2)) := \sup_{x \in \R} \left[\psi_1^\eps(x)-c(x)\right] + \psi_2(\omega(t_2))+c(\omega(t_2)) +\eps\phi_2(\omega(t_2))$. Then there exist $M^1, M^2$ such that
  \begin{align*}
    \gamma^\eps\circ r_2(\omega,t_1,t_2) & \ \le\  \psi_1^\eps(\omega(t_1)) + \psi_2(\omega(t_2)) + \sum_{i=1}^2 M^i_{t_i}(\omega)\\
     & \ = \ \psi_2^\eps(\omega(t_2)) + \sum_{i=1}^2
                                                  M^i_{t_i}(\omega)-\eps \phi_2(\omega(t_2))-c(\omega(t_2)) +
                                                  c(\omega(t_1))\\  
                                & \qquad {}+ \left[\psi_1^\eps(\omega(t_1))-
                                                  c(\omega(t_1))\right] - \sup_{x \in \R} \left[\psi_1^\eps(x)-c(x)\right].
  \end{align*}
  Hence,
  \begin{align*}
    \gamma\circ r_2(\omega,t_1,t_2)
     & \ \le \ \psi_2^\eps(\omega(t_2)) + \sum_{i=1}^2
                                                M^i_{t_i}(\omega)+ \eps (\phi_2(\omega(t_1)) -
                                                \phi_2(\omega(t_2))) \\ 
                                & \qquad {} -c(\omega(t_2)) +  c(\omega(t_1))\\ 
     & \ =\  \psi_2^\eps(\omega(t_2)) + \sum_{i=1}^2
                                                M^i_{t_i}(\omega)\\ 
                                & \qquad {} + \eps\left[ (\phi_2(\omega(t_1))-\zeta_{t_1}^2) -
                                  (\phi_2(\omega(t_2))-\zeta_{t_2}^2)\right] + 
                                  \eps(\zeta_{t_1}^2 - \zeta_{t_2}^2)\\
                                & \qquad {} + \left[ (c(\omega(t_1))-\zeta_{t_1}^c) -
                                  (c(\omega(t_2))-\zeta_{t_2}^c)\right] + 
                                  (\zeta_{t_1}^c - \zeta_{t_2}^c).
  \end{align*}
  Since $\zeta_t^2$ is an increasing process, compensating $\phi_2$, then $\zeta_{t_2}-\zeta_{t_1} \ge 0$ whenever $t_1 \le t_2$. Similarly, $\zeta^c_t$ is the increasing process compensating $c$, and the same argument as above holds. Note that $\zeta^c$ is $\S$-continuous since $c$ is assumed smooth. It follows that $(\psi_1^\eps,\psi_2) \in D^\eps(\gamma^\eps)$ implies $\psi_2^\eps \in D'(\gamma)$, where
  \begin{equation*}
    D'(\gamma):= \left\{\psi_2 \in \mathcal{E}_2: 
       \begin{array}{l}
         \text{there exist two $\S$-continuous martingales $M^i, M^i_0 =0$}\\
         \text{such that } \psi_2(\omega(t_2)) +\sum_{i=1}^2 M^i_{t_i}(\omega)  \geq \gamma\circ r_2(\omega,t_1,t_2)
       \end{array}\right\}.
  \end{equation*}
  It follows by making $\eps, \delta$ small that
  \begin{align*}
    P^*_2 \ge \inf_{\psi_2 \in D'(\gamma)} \int \psi_2 \dd\mu_2(x),
  \end{align*}
  and as usual, the inequality in the other direction is easy.\medskip 
  
  To establish the claim in the general case we can now successively introduce more and more constraints accounting for more and more Lagrange multipliers and use either only the first or the first and the second argument to prove the full claim.
\end{proof}

To conclude, we can follow the reasoning of Section~\ref{sec:result}, more precisely Step~1 and Step~3, and obtain the following robust super-hedging result: 
\begin{theorem}\label{thm:generalduality}
  Suppose that $n\in\mathbb{N}$, $I\subseteq \{1,\ldots, n\}$, $n\in I$ and that $\mu_i$ is a centered probability measure on $\mathbb{R}$ for each $i\in I$ and let $G\colon C[0,n]\to \R$ be of the form
\begin{align}\label{invform}
  G(\omega)=\gamma( \ntt(\omega)_{\llcorner [0, \langle \omega\rangle_{n}] }, \langle \omega\rangle_{1}, \ldots, \langle \omega\rangle_{n}),
\end{align}
where $\gamma $ is $\S_n$-upper semi-continuous and bounded from above.  Let us define 
  \begin{align*}
    P_n:=\sup\big\{ {\E}_\P[G]: \P \text{ is a martingale measure on } C[0,n],\, S_0=0,\,  S_{i}\sim \mu_i\text{ for all }i\in I\big\}
  \end{align*}
  and 
  \begin{align*}
    D_n:=\inf\left\{a: \begin{array}{l} \exists c>0,  f \in C^\infty(\R,\R) \text{ s.t. } |f| /(1+\varphi_n) \text{ is bounded}, \\
    			(H^m)_{m \in \N} \subseteq \mathcal{Q}_{f,c} \text{ and } (\psi_j)_{j\in I}, \int \psi_j \dd\mu_j=0 \text{ s.t. }\forall \omega \in C_{\mathrm{qv}}[0,n]\\
                         a+\sum_{j\in I} \psi_j(S_j(\omega)) + \liminf_{m\to \infty}(H^m\cdot S)_n(\omega) \geq G(\omega)
                      \end{array}
            \right\},
  \end{align*}
   where for $f \in C^2(\R,\R)$ we set
  \[
  \mathcal{Q}_{f,c} :=\big\{H: H \text{ is a simple strategy and } (H \cdot S)_t(\omega) \ge -c - \zeta^f_t(\omega) \ \forall (\omega,t) \in C_{\mathrm{qv}}[0,n] \times [0,n] \big\}.
  \]
  
  Under the above assumptions we have $P_n=D_n$.
\end{theorem}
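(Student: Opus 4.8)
The plan is to transcribe the three-step argument of Section~\ref{sec:result} to the multi-marginal setting, with Step~2 (the one-marginal Skorokhod duality) replaced by the already established Theorem~\ref{thm:duality}. The inequality $P_n\le D_n$ is the routine one: given a competitor $a$ in $D_n$ with associated $f$, $(H^m)\subseteq\mathcal{Q}_{f,c}$ and $(\psi_j)_{j\in I}$, and any martingale measure $\P$ on $C[0,n]$ with $S_i\sim\mu_i$ for $i\in I$, I would take $\E_\P$ of the super-hedging inequality: the terms $\E_\P[\psi_j(S_j)]$ vanish since $\int\psi_j\dd\mu_j=0$, and the lower bound $-c-\zeta^f_n$ is $\P$-integrable (exactly as in the proof of $P\le D$ in Theorem~\ref{HedgingDual}: It\^o's formula exhibits $\zeta^f$ as a compensator, $|f|/(1+\varphi_n)$ is bounded and $\int\varphi_n\dd\mu_n<\infty$), so Fatou applied to $(H^m\cdot S)_n$ yields $\E_\P[G]\le a$; taking the supremum over $\P$ and then the infimum over $a$ gives $P_n\le D_n$.

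For the converse, Step~1 reduces the primal to the optimal multi-marginal Skorokhod embedding problem \eqref{primal}, i.e.\ shows $P_n=P^*_n$. Given a martingale measure $\P$ on $C[0,n]$ with $S_i\sim\mu_i$, the Dambis--Dubins--Schwarz theorem turns $(\ntt(S)_{t\wedge\langle S\rangle_n})_{t\ge0}$ into a stopped Brownian motion on a (possibly enlarged) space, under which the ordered times $\langle S\rangle_1\le\cdots\le\langle S\rangle_n$ become an increasing tuple of stopping times embedding $\mu_i$ for $i\in I$, with $\E_\P[G]=\E[\gamma\circ r_n(\beta,\langle S\rangle_1,\dots,\langle S\rangle_n)]$; conversely, given $(\tau_1,\dots,\tau_n)\in\RST_n(\mu)$, the process $S_t:=\bar B_{A_t}$ with $A$ the linear interpolation of $(i,\tau_i)_{i=0}^n$ is a martingale on $[0,n]$ with $\langle S\rangle_i=\tau_i$ and $\E_{\ol\W}[G(S)]=\E_{\ol\W}[\gamma\circ r_n(\bar B,\tau)]$. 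Since $P^*_n$ is independent of the space (as long as it supports a Brownian motion and an independent uniform), one may realise everything on $(\ol\Omega,\ol\F,\ol\W)$ by \cite[Lemma~3.11]{BeCoHu14}, and the integrability $\E_{\ol\W}[\zeta^n_{\tau_n}]<\infty$ built into $\RST_n(\mu)$ is automatic from $\bar B_{\tau_n}\sim\mu_n$ and $\int\varphi_n\dd\mu_n<\infty$. Step~2 is then precisely Theorem~\ref{thm:duality}: $P^*_n=D^*_n$.

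Step~3, dominating $D_n$ by $D^*_n$, is the heart of the matter and the direct analogue of Step~3 in Section~\ref{sec:result}. Fix $a>D^*_n=P_n$. Passing from $C(\R_+)$ to $\Omega^{\mathrm{qv}}$ by Proposition~\ref{prop:very cont} and density (as for $D^*=D^{*,\mathrm{qv}}$ in Section~\ref{sec:result}), Theorem~\ref{thm:duality} and Convention~\ref{rem:IC} produce $(\psi_j)_{j\in I}$ with $\int\psi_j\dd\mu_j=0$ and $\S$-continuous martingales
\begin{equation*}
  M^i_t(\omega)=\E_\W[m^i|\F^{\mathrm{qv}}_t](\omega)+f_i(B_t(\omega))-\zeta^{f_i}_t(\omega),\qquad i=1,\dots,n,
\end{equation*}
with $m^i\in C_b(\Omega^{\mathrm{qv}})$, $\E_\W[m^i]=0$ and $f_i$ smooth with $|f_i|/(1+\varphi_i)$ bounded, such that
\begin{equation*}
  a+\sum_{j\in I}\psi_j(B_{t_j}(\omega))+\sum_{i=1}^n M^i_{t_i}(\omega)\ge\gamma(\omega,t_1,\dots,t_n)\quad\text{for all }\omega\in\Omega^{\mathrm{qv}},\ (t_1,\dots,t_n)\in\Delta_n.
\end{equation*}
I would evaluate this inequality at the path $\ntt(\omega)$ and the ordered times $t_i=\langle\omega\rangle_i$. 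Since $\omega$ and $\langle\omega\rangle$ have the same intervals of constancy on $\Omega^{\mathrm{qv}}$, one has $\ntt(\omega)_{\langle\omega\rangle_i}=\omega(i)=S_i(\omega)$, so the right-hand side becomes $G(\omega)$ and $\psi_j(B_{\langle\omega\rangle_j}(\ntt(\omega)))=\psi_j(S_j(\omega))$; by the computation $Q\circ\tntt(\omega,t)=\varphi(S_t(\omega))-\zeta_t(\omega)$ from Section~\ref{sec:result} applied to each $f_i$, the compensator part of $M^i_{\langle\omega\rangle_i}(\ntt(\omega))$ equals the time-$i$ value $f_i(S_i(\omega))-\zeta^{f_i}_i(\omega)$ of F\"ollmer's pathwise It\^o expansion of $f_i(S)$, while the martingale part equals $(N^i\circ\tntt)(\omega,i)$ with $N^i_t:=\E_\W[m^i|\F^{\mathrm{qv}}_t]$. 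It then remains to realise, by simple strategies in some $\mathcal{Q}_{f,c}$, the time-$n$ capital $\sum_{i=1}^n\big((N^i\circ\tntt)(\omega,i)+f_i(S_i(\omega))-\zeta^{f_i}_i(\omega)\big)$: for the martingale parts I would repeat Lemma~\ref{lem:cts domination} with $M,m$ replaced by $N^i,m^i$ --- applying Vovk's pathwise Dambis--Dubins--Schwarz theorem (Theorem~\ref{thm:vovk-dds}) and Remark~\ref{rmk:outer} to the $\G$-invariant $\tilde m^i:=m^i\circ\ntt$ --- to get simple strategies $\tilde H^{i,m}$ with $\varepsilon_i+\liminf_m(\tilde H^{i,m}\cdot B)_t\ge(N^i\circ\tntt)(\cdot,t)$ for all $t$, and freeze them after time $i$; for the compensator parts I would use F\"ollmer's pathwise It\^o formula along the dyadic Lebesgue partition of Section~\ref{sec:setting} to obtain simple strategies converging pathwise to $f_i(S_{\cdot\wedge i})-\zeta^{f_i}_{\cdot\wedge i}$, stopped once the wealth drops below $-c_i-\zeta^{f_i}_\cdot$. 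Setting $H^m$ equal to the sum of the frozen $\tilde H^{i,m}$ and the $i$-th F\"ollmer strategies, $f:=\sum_i f_i$ (so $|f|/(1+\varphi_n)$ is bounded since $\varphi_i\le\varphi_n$) and $c$ the accumulated constant, one checks $(H^m)\subseteq\mathcal{Q}_{f,c}$ and
\begin{equation*}
  a+\textstyle\sum_i\varepsilon_i+\liminf_m(H^m\cdot S)_n(\omega)+\sum_{j\in I}\psi_j(S_j(\omega))\ge G(\omega)\qquad\text{for all }\omega\in C_{\mathrm{qv}}[0,n];
\end{equation*}
letting $\varepsilon_i\downarrow0$ and then $a\downarrow P_n$ gives $D_n\le P_n$, hence $P_n=D_n$. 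To pass from the two explicit cases treated in Theorem~\ref{thm:duality} to general $I$ one argues exactly as there, so no new idea enters at the level of the super-hedging statement.

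I expect the main obstacle to be the bookkeeping in Step~3 caused by the fact that each $M^i$ is evaluated at an \emph{intermediate} random time $\langle\omega\rangle_i$ rather than at the terminal time: every one of the $2n$ replicating strategies must be frozen at its own horizon $i$ while the admissibility bound $-c-\zeta^f_\cdot$ is maintained \emph{uniformly} along the sum, which forces one to track the compensating functions produced inside the proof of Theorem~\ref{thm:duality} (they are combinations of the $\varphi_i$ and of functions in $\textsf{CV}$) carefully enough to assemble a single $f$ with $|f|/(1+\varphi_n)$ bounded for which the stopping argument of Section~\ref{sec:result} still goes through. Everything else is a faithful transcription of the $n=1$ arguments combined with Theorem~\ref{thm:duality}.
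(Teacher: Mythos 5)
Your proposal follows exactly the route the paper itself takes: the paper's proof of Theorem~\ref{thm:generalduality} is precisely ``transcribe Step~1 and Step~3 of Section~\ref{sec:result}, with Step~2 replaced by Theorem~\ref{thm:duality}'', and your write-up of the easy inequality, of the reduction $P_n=P^*_n$, of the use of Convention~\ref{rem:IC} for the structure of the $M^i$, and of the Step~3 domination via Theorem~\ref{thm:vovk-dds}, Lemma~\ref{lem:cts domination} applied to each $m^i$, and F\"ollmer's pathwise It\^o formula for each $f_i$ (with $f=\sum_i f_i$, $|f|/(1+\varphi_n)$ bounded) matches the intended argument, and indeed supplies more detail than the paper records.

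One detail needs repair: in the converse direction of Step~1 you realise a tuple $(\tau_1,\dots,\tau_n)\in\RST_n(\mu)$ as quadratic-variation times of a martingale by setting $S_t:=\bar B_{A_t}$ with $A$ the \emph{linear interpolation} of $(i,\tau_i)$. The interpolated random times $(1-\lambda)\tau_{i-1}+\lambda\tau_i$ are in general not stopping times, and $\bar B$ evaluated along them need not be a martingale (e.g.\ for $\tau=\inf\{t:\bar B_t\in\{-1,2\}\}$ the process $\lambda\mapsto \bar B_{\lambda\tau}$ fails the martingale property, since observing it up to $\lambda=1/2$ reveals $\tau$ and biases the conditional law of $\bar B_\tau$). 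The fix is the obvious multi-marginal analogue of the paper's one-period construction $S_t=B_{(t/(1-t))\wedge\tau}$: on each interval $[i-1,i]$ take
\begin{equation*}
  A_t:=\min\Big\{\tau_{i-1}+\tfrac{t-(i-1)}{i-t},\,\tau_i\Big\},
\end{equation*}
so that every $A_t$ is a stopping time, $S_t=\bar B_{A_t}$ is a continuous martingale on $[0,n]$ under $\ol\W$ with $\langle S\rangle_i=\tau_i$ and $S_i=\bar B_{\tau_i}$, and $\E_{\ol\W}[G(S)]=\E_{\ol\W}[\gamma\circ r_n(\bar B,\tau_1,\dots,\tau_n)]$. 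With this correction the rest of your argument, including the bookkeeping in Step~3 of freezing each replicating strategy at its own horizon while keeping the admissibility bound $-c-\zeta^f$, goes through as you describe.
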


Finally, we note that Theorem \ref{thm:generalduality} could be further extended based on the above arguments. For example, we could include additional market information on prices of further options of the invariant form \eqref{invform}.

\bibliography{joint_biblio}{}
\bibliographystyle{plain}

\end{document}